\newcommand\bigfrac[2]{\frac{\displaystyle{#1}}{\displaystyle{#2}}}
\renewcommand{\vec}[1]{\boldsymbol{#1}}
\newtheorem{thm}{Theorem}
\title{\LARGE Modelling and analysis of rank ordered data with ties via a generalized
  Plackett-Luce model}
\author{Daniel A.\ Henderson} 
\date{School of Mathematics, Statistics \& Physics, Newcastle
  University, NE1 7RU, U.K.\\ \href{mailto:daniel.henderson@newcastle.ac.uk}{daniel.henderson@newcastle.ac.uk}}
\begin{document}

\maketitle

\begin{abstract}

  A simple generative model for rank ordered data with ties is
  presented. The model is based on ordering geometric latent variables
  and can be seen as the discrete counterpart of the Plackett-Luce
  (PL) model, a popular, relatively tractable model for permutations.
  The model, which will be referred to as the GPL model, for
  generalized (or geometric) Plackett-Luce model, contains the PL
  model as a limiting special case. A closed form expression for the
  likelihood is derived.  With a focus on Bayesian inference via data
  augmentation, simple Gibbs sampling and EM algorithms are derived
  for both the general case of multiple comparisons and the special
  case of paired comparisons. The methodology is applied to several
  real data examples. The examples highlight the flexibility of the
  GPL model to cope with a range of data types, the simplicity and
  efficiency of the inferential algorithms, and the ability of the GPL
  model to naturally facilitate predictive inference due to its simple
  generative construction. 

\smallskip

 \noindent \textit{Key words: Bayesian inference; bucket order; EM algorithm; Gibbs
   sampler; latent variables; MCMC algorithms; ordered partitions; prediction}

\end{abstract}

\section{Introduction}
\label{sec:intro}

Data in the form of rankings and orderings are ubiquitous and there
are many models for such data \citep{Marden95,AlvoY14}. The
Plackett-Luce (PL) model \citep{Luce59,Plackett75} is one such model
for rank ordered data that has proved popular due to its relative
tractability, intuitive interpretation, and its flexibility in
handling partial rankings of subsets of entities and top-$m$
rankings~\citep{gormley2009grade,mollica2017bayesian,HendersonK18}.
Efficient computational algorithms for fitting Plackett-Luce models
are also well-established \citep{hunter2004mm,GuiverS09,CaronD12}.
However, like many models for rank ordered data, the Plackett-Luce
model cannot explicitly handle ties.

This paper considers a model which can be considered as a
generalization of the Plackett-Luce model that can explicitly handle
ties.  The intuition behind the model is based on the latent variable
representation of the Plackett-Luce model. Under the Plackett-Luce
model, the probability of a particular ordering is equivalent to the
probability that realised values from independent exponential latent 
variables take that particular ordering. It is clearly impossible to
generate ties under the Plackett-Luce model due to the continuous
nature of the latent variables. Replacing the exponential latent
variables by their discrete counterparts, geometric latent variables,
leads to a model which naturally accommodates ties and which
inherits some of the desirable properties of the Plackett-Luce model,
such as tractability and the ability to deal naturally with partial
rankings. The Plackett-Luce model is a limiting special case of this
proposed model and so it can be considered as a generalized
Plackett-Luce model. In the remainder of the paper this model shall be
referred to as the GPL model for brevity, with the G standing for
\textit{Generalized} or \textit{Geometric}. 

The geometric latent variable model is not new, having been described
independently in \cite{BakerS21}. However, this paper makes several
contributions.  Firstly, an explicit form for the likelihood is
derived which facilitates maximum likelihood and Bayesian inference,
and secondly, the model is demonstrated in a range of examples, not
just those involving paired comparisons. The focus in this paper is on
Bayesian inference via the latent variable formulation and as such the
methods rely on those introduced by \cite{CaronD12} for the
Plackett-Luce model. This leads to the derivation of efficient Gibbs
samplers for posterior sampling and EM algorithms for determining
maximum a posterior (MAP) estimators, which is a third significant
contribution of the paper. 

Explicit probabilistic models for rank ordered data with ties are rare. The
Davidson-Luce (DL) model \citep{FirthKT19arxiv,turner2020modelling} is one such
model. The DL model is based on the Davidson model \citep{Davidson70}
which is a generalisation of the Bradley-Terry model \citep{Zermelo29,
BradleyT52} for paired comparisons to account for ties. The
Davidson-Luce model generalises the Davidson model to the case of
rank ordered data (with more than two entities), and inherits many of the
attractive properties of the Davidson model. As a direct competitor
to the GPL model of this paper, the DL model will be discussed in
more detail in Section~\ref{sec:Davidson-Luce}, and some comparisons
made.  \cite{barney2015joint} describe an alternative approach
for explicitly handling ties in rankings based on continuous latent
variables, with the probability of a tie depending on the closeness of
the latent variables.
There is also a related strand of research in the computer science/machine
learning literature; see for example, \cite{truyen2011probabilistic} who introduce a
Probabilistic Model on Ordered Partitions (POMP) which has some
similarities to the Davidson-Luce model.  Considering a rank ordering
with ties as a set of ordered partitions of the entities is appealing;
the partitions are sometimes referred to as ``buckets'' and this leads
to the \textit{bucket order} terminology. The related problem of finding
an optimal bucket order from a set of preferences is the subject of
several papers \citep{gionis2006algorithms,feng2008discovering,kenkre2011discovering}  but is not the main objective of the work presented here.

Given the relative rarity of explicit models for rank ordered data
with ties most analyses rely on a model which does not allow ties
(such as the PL model) together with some form of approximation; see,
for example, \cite{baker2015deterministic} for an overview of
possible approximations.  Although appealing for inferential purposes,
a drawback of these approximate approaches is that the inherent
continuous nature of the underlying models means that they are not
well suited to prediction, since ties cannot be generated and the
probability of a tie in a future comparison cannot be computed. Another
approach suitable for a Bayesian analysis is to randomly break ties
within an Markov chain Monte Carlo (MCMC) algorithm \citep{glickman2015stochastic}. An additional
drawback of this approach is that the sampler must explore the space
of permutations of the entities in each ordered partition and this may
be prohibitively large with large partitions and many ties.

In this paper it will be demonstrated that the GPL model overcomes
several of the drawbacks of existing approaches. For example, unlike
the methods which rely on approximations based on models without ties,
the GPL model is exact, the predictive probability of a tie can be computed and its generative nature is ideally suited to
predictive inference.  Also, unlike some of the aforementioned
explicit models for ties, the GPL model is relatively tractable and
scales up to reasonably sized datasets. The tractability is important
as it admits a closed form likelihood function for the GPL model which
should facilitate computation in widely available software packages,
smoothing the way for developments and extensions.

The remainder of the paper is structured as follows.
Section~\ref{sec:data} describes rank ordered data with ties and
establishes some notation. This is done through a simple illustrative
example which runs throughout the paper. The GPL model is described in
Section~\ref{sec:model} and inference is considered in
Section~\ref{sec:inference}. Three real data sets are used for illustration
in Section~\ref{sec:examples} and the paper concludes in
Section~\ref{sec:discuss} with a discussion.

\section{Data and notation}
\label{sec:data}

Suppose there are $K$ entities represented by the set
$\mathcal{K}=\{1,2,\ldots,K\}$ and $n$ rank orderings of possible
subsets of these entities. It is assumed that the rank orderings have
arisen from a set of comparisons involving two or more of the
entities. These comparisons might be based on preference or
competition.

\subsection{Simple illustrative example}
\label{sec:simp-data}

To fix notation and ideas it is useful to consider the following
simple example. Suppose there are $K=5$ entities labelled
$\mathcal{K}=\{1,2,3,4,5\}$ and suppose they are ranked such that
entity 1 was first, entity 2 was second, then entities 4 and 5 were
tied for third, with entity 3 ranked in last place. A rank ordering
with ties is technically a \textit{weak ordering} of the set of
entities, or a totally ordered partition of the entities, succinctly
referred to as a bucket order. There are several ways to represent
such data; for example, $\{1\},\{2\},\{4,5\},\{3\}$. In this paper
these data are represented as an ordered list of entities
$\vec{y}=(1,2,4,5,3)'$ and an associated ordered set indicator
$\vec{s}=(1,2,3,3,4)'$, such that $s_j$ denotes the ordered set (or
``bucket'') which contains entity $y_j$; $\vec{s}$ defines an ordered
partition of the entities. Here $\vec{s}$ keeps track of the ties; if
$s_{j}=s_{j+1}$ then entities $y_j$ and $y_{j+1}$ are tied, and
therefore if swapped would convey the same information. In other
words, the data in this simple illustrative example could equally have
been stored as $\vec{y}=(1,2,5,4,3)'$, with the same $\vec{s}$.

\subsubsection{Patterns of ties and number of bucket orders}
\label{sec:patterns}

It is also useful to define $t_{j}=\mathbb{I}(s_{j}=s_{j+1})$ for
$j=1,\ldots,K-1$, where $\mathbb{I}(A)$ is the indicator function
which equals 1 if $A$ is true and equals 0 otherwise. Clearly
$t_{j}=1$ if and only if the entities $y_{j}$ and $y_{j+1}$ are tied.
In the simple illustrative example, $\vec{t}=(0,0,1,0)'$, indicating
that entities 4 and 5 are tied. Since $t_i$ can take the value 0 or 1
there are $2^{K-1}$ possible patterns of ties. Therefore, in this
simple example with $K=5$ there are 16 possible tie patterns. The
number of bucket orders of $K$ entities is given by the $K$th ordered
Bell number (or Fubini number); for example, with $K=5$ there are 541
possible bucket orders. The ordered Bell numbers grow quickly, with,
for example, over 102 million possible bucket orders when there are 10
entities. In contrast, if ties are not allowed then there are $K!$
possible orderings of $K$ entities, and, for example, when $K=10$ this
is approximately 3.6 million, far fewer than the number of bucket
orders. 

\subsection{Types of rank orderings}
\label{sec:types-data}

Suppose $M\leq K$ entities are considered in a comparison
then a \textit{complete} rank ordering refers to a rank ordering of
all $M$ entities. Alternatively, if only the top $m< M$ entities are
reported then this is referred to as a top-$m$ rank ordering. If $M<K$
then these complete and top-$m$ rank orderings are based on a subset of
entities. A paired comparison arises when $M=2$.  The GPL model, as
described in Section~\ref{sec:model}, can handle all the above types of
data.

\section{The GPL model}
\label{sec:model}

In this section the proposed GPL model will be
described. First, however, it will be useful to provide a brief
description of the Plackett-Luce model.   

\subsection{The Plackett-Luce model}
\label{sec:pl}

The Plackett-Luce (PL) model provides a probabilistic model for a rank
ordering $\vec{y}$, where $\vec{y}$ is as defined in
Section~\ref{sec:data}. Let $\lambda_k>0$ denote the ``strength''
parameter  (in terminology originating from \cite{Zermelo29}) 
corresponding to entity $k$, for $k\in\mathcal{K}$, and let
$\vec{\lambda}=(\lambda_1,\ldots,\lambda_K)'$. Then under the PL
model, the probability of observing $\vec{y}$ is 
\begin{equation}
\Pr(\vec{Y}=\vec{y}|\vec{\lambda})=  \prod_{j=1}^{K-1}
\frac{\lambda_{y_j}}{\sum_{\ell=j}^{K} \lambda_{y_{\ell}}}. 
\label{eq:pl}
\end{equation}
The probability in~\eqref{eq:pl} is invariant to scalar
multiplication of $\vec{\lambda}$. Therefore only $K-1$ of the
parameters are identifiable from the data when considering
likelihood-based inference on multiple datasets. 

The PL model has a generative representation in terms of ordering
independent exponential latent variables. This latent variable
representation is sometimes referred to as a Thurstonian
representation (for example, \cite{bockenholt1992thurstonian}) after
\cite{Thurstone27} or a random utility model
\citep{train2009discrete}. The latent variables can be thought of as
representing the unobserved utility (for preference data) or
performance/score (for data arising from competition). Specifically,
for the PL model, let $W_k\overset{\text{indep.}}{\sim}
Exp(\lambda_k),\ k\in\mathcal{K}$ be independent exponential random
variables, one for each entity, with means $\lambda_k^{-1}$. Then, it
can be shown that
\begin{equation}
\Pr(W_{y_1} < W_{y_2} < \cdots < W_{y_K}|\vec{\lambda})=\prod_{j=1}^{K-1}
\frac{\lambda_{y_j}}{\sum_{\ell=j}^{K} \lambda_{y_{\ell}}},
\label{eq:pl-latent}
\end{equation}
which is precisely the PL probability in~\eqref{eq:pl}.  It is clear
that~\eqref{eq:pl-latent} holds for the $W_k$ being any
monotonically increasing function of an
exponential random variable. Moreover, these are the only random
variables that lead to a concise closed form expression for
$\Pr(W_{y_1} < W_{y_2} < \cdots < W_{y_K})$; see
\cite{Baker20} and \cite{BakerS21} for further details.  This
tractability adds to the appeal of
the PL model. The PL model also provides a closed form expression for
the probability of a top-$m$ rank ordering, for example, 
\begin{equation}
\Pr(W_{y_1}  < \cdots < W_{y_m} < \{W_{y_{m+1}},\cdots,W_{y_{K}}\}|\vec{\lambda})=\prod_{j=1}^{m}
\frac{\lambda_{y_j}}{\sum_{\ell=j}^{K} \lambda_{y_{\ell}}}.
\label{eq:pl-latent-top}
\end{equation}
The latent exponential random variables are continuous and so as a
generative model for rank ordered data the PL model will never
generate ties, since no two realised values will be identical. The
consequence is that the PL model, like most probabilistic models for
rank ordered data, cannot model ties directly. 

\subsection{The GPL model: basic idea and origins}
\label{sec:basic}

The basic intuition underpinning the GPL model is to replace the
independent exponential random variables in the generative model by
their discrete counterparts, geometric random variables. That is
replace $W_k\overset{\text{indep.}}{\sim} Exp(\lambda_k)$ by
$W_{k}\overset{\text{indep.}}{\sim} \text{Geom}(\theta_k)$ for
$k\in\mathcal{K}$, where $0<\theta_k\leq 1$ represents the strength
 parameter for entity $k$.  Here the random variable $W_k$ has probability mass function
\[
\Pr(W_k=w)=(1-\theta_k)^{w-1}\theta_k, \quad w\in\{1,2,3,\ldots\}.
\]

This representation clearly allows for ties
due to the discrete nature of the geometric latent variables. It will
also be shown that this representation inherits some of the desirable
properties of the PL model such as tractable closed forms for the
probability of rank orderings (both complete and top-$m$) with ties.

\subsubsection{Interpretation in terms of Bernoulli trials}
\label{sec:bern}

The generative process can be reduced to a series of independent Bernoulli trials, one for each entity, such that
entity $k$ has probability of success $\theta_k$ in any one trial,
that is $I_{k \ell}\overset{\text{indep.}}{\sim}
\text{Bern}(\theta_k)$, for $k\in\mathcal{K}$ and $\ell=1,2,\ldots$.
The entities are to be ordered on the number of trials until their
first success, $W_k$, for $k\in\mathcal{K}$. This can be thought of
as the number of flips until a head is flipped of a biased coin
with probability of heads $\theta_k$, with the entities ranked in
terms of the number of flips until flipping a head, from fewest to
most.  Hence, the larger $\theta_k$ is the more likely entity $k$ is
to flip a head first, and therefore be ranked first.

\subsubsection{Comparison to previous work}
\label{sec:BakerS21}

The idea of using geometric latent variables to induce ties was
described independently in \cite{BakerS21} using different notation
and terminology. The main difference to this work is in the definition
of the geometric distribution; \cite{BakerS21} use the number of
failures before the first success and thus the sample space is
$\{0,1,2,\ldots\}$. The two definitions lead to the same model and
likelihood due to the fact that one definition defines random
variables that are linear transformations of those from the other
definition. As pointed out in \cite{BakerS21}, the geometric
distribution (in its two definitions) is the only discrete
distribution for which the following derivations hold. The choice of
$\{1,2,\ldots\}$ for the sample space of the geometric latent
variables in this paper was made independently of knowledge of
\cite{BakerS21}, but nevertheless, such a parameterisation leads to a
neater form for the survival function and hence slightly more
tractable derivations. It also leads to parameters that are slightly
easier to interpret, although in both cases the differences are small.
\cite{BakerS21} focused mainly on the properties of the model and
inference in the case of paired comparisons. They provided an
algorithm for computing the likelihood in the general case rather than
a closed form expression. In this paper such an expression is provided
and the focus is more on inference in the general case.

\subsection{Complete rankings on subsets of entities}
\label{sec:complete}

Complete rank orderings are conceptually simpler than top-$m$ rank
orderings and so they provide the starting point for a description of
the model and the notation used  in the paper. 

\subsubsection{Illustrative example}
\label{sec:complete-simple}

The probability of
observing $\vec{y}=(1,2,4,5,3)',\vec{s}=(1,2,3,3,4)'$ given
$\vec{\theta}=(\theta_1,\ldots,\theta_5)'$ is defined to be the
probability of the latent geometric random variables
$\vec{W}=(W_1,\ldots,W_5)'$ taking the ordering $\vec{y}$ with ties
defined by~$\vec{s}$. The tractability of the geometric distribution,
in particular its survival function, admits a closed form expression
for the above probability,
\begin{align}
  \Pr(&\vec{Y}=(1,2,4,5,3)',\ \vec{S}=(1,2,3,3,4)'\,|\,\vec{\theta})\notag\\
  &=\Pr(W_{1} < W_{2} < W_{4} =  W_{5} < W_{3}\,|\,\vec{\theta})\notag\\
 &=\sum_{w1=1}^\infty \Pr(W_1=w_1) \sum_{w_2=w_1+1}^\infty
 \Pr(W_2=w_2) \sum_{w_4=w_2+1}^\infty \Pr(W_4=w_4)\Pr(W_5=w_4)\sum_{w_3=w_4+1}^\infty\Pr(W_3=w_3)\notag\\
&\begin{aligned}
 &=\frac{\theta_1(1-\theta_2)(1-\theta_4)(1-\theta_5)(1-\theta_3)}{1-(1-\theta_1)(1-\theta_2)(1-\theta_4)(1-\theta_5)(1-\theta_3)}  \times\frac{\theta_2(1-\theta_4)(1-\theta_5)(1-\theta_3)}{1-(1-\theta_2)(1-\theta_4)(1-\theta_5)(1-\theta_3)}\\
&\qquad\qquad\qquad\qquad\times
\frac{\theta_4\theta_5(1-\theta_3)}{1-(1-\theta_4)(1-\theta_5)(1-\theta_3)}
,
\end{aligned}
\label{eq:simple}
 \end{align}
where dependence on $\vec{\theta}$ has been suppressed on the third
line for brevity. It is helpful to note that due to the tractability
of the geometric distribution  this probability can be factorised into the product of probabilities
 involving pairs of random variables, that is
\begin{align*}
\Pr(W_{1} < &W_{2} < W_{4} =  W_{5} < W_{3}\,|\,\vec{\theta})\\
  &=\Pr(W_{1}< \min \{W_2,W_3,W_4,W_5\}\,|\,\vec{\theta})\Pr(W_{2}< \min
  \{W_3,W_4,W_5\}\,|\,\vec{\theta})\\
& \qquad\times \Pr(W_{4}= \min
  \{W_3,W_5\}\,|\,\vec{\theta})\Pr(W_5 < W_3\,|\,\vec{\theta})\\
 &=\prod_{j=1}^4 \Pr(W_{y_j}< \min_{\ell>j}
  W_{y_\ell}\,|\,\vec{\theta})^{1-\mathbb{I}(s_{j}=s_{j+1})} 
\Pr(W_{y_j}=\min_{\ell>j}
  W_{y_\ell}\,|\,\vec{\theta})^{\mathbb{I}(s_{j}=s_{j+1})}.
\end{align*}
This factorisation suggests an alternative, multi-stage \citep{FlignerV88},
representation of the GPL model.   In the above example, in the first
stage, entity 1 is
chosen from the set of available entities
$\mathcal{R}_1=\{1,2,3,4,5\}$. This has probability
$\Pr(W_{1}< \min \{W_2,W_3,W_4,W_5\}\,|\,\vec{\theta})$. Entity 1 is
removed leaving the set of available entities at the second stage as
$\mathcal{R}_2=\{2,3,4,5\}$. At this second stage, entity 2
is chosen from the set $\mathcal{R}_2$, and this has probability
$\Pr(W_{2}< \min \{W_3,W_4,W_5\}\,|\,\vec{\theta})$. At the third
stage, the set of available entities is $\mathcal{R}_3=\{3,4,5\}$ and
entities 4 and 5 are chosen (that is, the bucket $\{4,5\}$) from
$\mathcal{R}_3$. This probability can be factorised as $\Pr(W_{4}= \min
  \{W_3,W_5\}\,|\,\vec{\theta})\Pr(W_5 < W_3\,|\,\vec{\theta})$. 

These paired comparison probabilities, $\Pr(X<Y)$ and $\Pr(X=Y)$, each have a simple closed form
(see Appendix~\ref{app:geom-prop})  giving in this example  
\begin{align}
\Pr(W_{1} < &W_{2} < W_{4} = W_{5} < W_{3}\,|\,\vec{\theta})\notag\\
&=\Pr(W_{1}< \min \{W_2,W_3,W_4,W_5\}\,|\,\vec{\theta})\Pr(W_{2}< \min
  \{W_3,W_4,W_5\}\,|\,\vec{\theta})\notag\\
& \qquad\times\Pr(W_{4}= \min
  \{W_3,W_5\}\,|\,\vec{\theta})\Pr(W_5 < W_3\,|\,\vec{\theta})\notag\\
&=\frac{\theta_1(1-\theta_2)(1-\theta_4)(1-\theta_5)(1-\theta_3)}{1-(1-\theta_1)(1-\theta_2)(1-\theta_4)(1-\theta_5)(1-\theta_3)}\times\frac{\theta_2(1-\theta_4)(1-\theta_5)(1-\theta_3)}{1-(1-\theta_2)(1-\theta_4)(1-\theta_5)(1-\theta_3)}\notag\\
&\qquad\qquad\qquad\qquad\times
\frac{\theta_4(1-(1-\theta_5)(1-\theta_3))}{1-(1-\theta_4)(1-\theta_5)(1-\theta_3)}\times
\frac{\theta_5(1-\theta_3)}{1-(1-\theta_5)(1-\theta_3)}\notag\\
&\begin{aligned}
&=\frac{\theta_1(1-\theta_2)(1-\theta_4)(1-\theta_5)(1-\theta_3)}{1-(1-\theta_1)(1-\theta_2)(1-\theta_4)(1-\theta_5)(1-\theta_3)}\times\frac{\theta_2(1-\theta_4)(1-\theta_5)(1-\theta_3)}{1-(1-\theta_2)(1-\theta_4)(1-\theta_5)(1-\theta_3)}\\
&\qquad\qquad\qquad\qquad\times
\frac{\theta_4\theta_5(1-\theta_3)}{1-(1-\theta_4)(1-\theta_5)(1-\theta_3)},
\end{aligned}
\label{eq:stagewise}
\end{align}
 which matches up with Equation~\eqref{eq:simple}.  The stage-wise
 probabilities in the factorisation of the joint probability
 in~\eqref{eq:stagewise} take a
 relatively simple form, and are directly interpretable in terms of the
 Bernoulli random variables constituting the geometric latent
 variables. Each stage-specific probability represents the probability
 that the chosen entities were successful (hypothetically flipped a
 head) and the un-chosen entities were unsuccessful (hypothetically
 flipped a tail), given that that there is at least one success. In
 other words, the numerator in each term is the product of the
 probabilities of success $(\theta)$ for the chosen entities multiplied
 by the product of the probabilities of failure $(1-\theta)$ for the
 un-selected entities; the denominator is 1 minus the probability all
 available entities were unsuccessful.

\subsubsection{General case}
\label{sec:complete-general}

Now consider the general case in which data on $n$ orderings involving
$K$ entities $\mathcal{K}=\{1,2,\ldots,K\}$ are observed.  Suppose
that $n_i$ entities $\mathcal{K}_i\subseteq \mathcal{K}$ are
considered in ordering $i$. Data for the $i$th ordering are
$\vec{y}_i=(y_{i1},y_{i2},\ldots,y_{in_i})'$ in which $y_{ij}$ denotes
the entity listed in $j$th position in the $i$th ordering and
$\vec{s}_i=(s_{i1},s_{i2},\ldots,s_{in_i})'$ in which $s_{ij}$ denotes
the ordered set which contains entity $y_{ij}$. For each entity $k$,
associate a parameter $0<\theta_k\leq 1$. Let
$W_{ik}\overset{\text{indep.}}{\sim} \text{Geom}(\theta_k)$ for
$i=1,\ldots,n$ and $k\in\mathcal{K}_i$. Then,
\begin{align*}
\Pr(\vec{Y}_i=\vec{y}_i,\vec{S}_i=\vec{s}_i|\vec{\theta})
&= \prod_{j=1}^{n_i-1} \Pr(W_{iy_{ij}} < \min_{\ell>j} W_{iy_{i\ell}}|\vec{\theta})^{1-\mathbb{I}(s_{ij}=s_{i,j+1})}\Pr(W_{iy_{ij}} = \min_{\ell>j} W_{iy_{i\ell}}|\vec{\theta})^{\mathbb{I}(s_{ij}=s_{i,j+1})}\\
&= \prod_{j=1}^{n_i-1} \left\{ \frac{\theta_{y_{ij}}\prod_{\ell>j} (1-\theta_{y_{i\ell}}) }{1-(1-\theta_{y_{ij}})\prod_{\ell>j}(1-\theta_{y_{i\ell}})} \right\}^{1-t_{ij}} \left[ \frac{\theta_{y_{ij}}\left\{1-\prod_{\ell>j} (1-\theta_{y_{i\ell}})\right\} }{1-(1-\theta_{y_{ij}})\prod_{\ell>j}(1-\theta_{y_{i\ell}})} \right]^{t_{ij}}\\
&=\prod_{j=1}^{n_i-1} \frac{\theta_{y_{ij}}\left\{ \prod_{\ell>j}(1-\theta_{y_{i\ell}})\right\}^{1-t_{ij}}\left\{ 1-\prod_{\ell>j}(1-\theta_{y_{i\ell}})\right\}^{t_{ij}}}{1-\prod_{\ell\geq j}(1-\theta_{y_{i\ell}})},
\end{align*}
where $t_{ij}=\mathbb{I}(s_{ij}=s_{i,j+1})$ for $i=1,\ldots,n$ and
$j=1,\ldots,n_i-1$.  

\subsubsection{Likelihood}
\label{sec:like-complete}

Suppose the data consists of $n$ independent rank orderings. Let
$D=\{\vec{y}_i,\vec{s}_i\}_{i=1}^n$ denote the observed data.  
The likelihood function for $\vec{\theta}$ is
\begin{align}
L(\vec{\theta}|D)\equiv p(D|\vec{\theta}) & = \prod_{i=1}^n \Pr(\vec{Y}_i=\vec{y}_i,\vec{S}_i=\vec{s}_i|\vec{\theta})\notag\\
& = \prod_{i=1}^n \prod_{j=1}^{n_i-1} \frac{\theta_{y_{ij}}\left\{ \prod_{\ell>j}(1-\theta_{y_{i\ell}})\right\}^{1-t_{ij}}\left\{ 1-\prod_{\ell>j}(1-\theta_{y_{i\ell}})\right\}^{t_{ij}}}{1-\prod_{\ell\geq j}(1-\theta_{y_{im}})}.
\label{eq:like}
\end{align}
Note that further simplification of the likelihood is possible but this is deferred to
Section~\ref{sec:like-top} where the more general case of top-$m$
ranking orderings on subsets of entities is considered.

An additional attractive feature of the GPL model is that there are no
issues with parameter identifiability, unlike the PL model; all $K$
parameters are likelihood identifiable. 

\subsubsection{Reversing the order of the latent variables}
\label{sec:reverse}

In the GPL model as described so far the latent geometric random
variables are ordered from the smallest to the largest. This can be
referred to as the \textit{smaller is better} version of the GPL.  The
alternative, in which the geometric latent variables are ordered from
the largest to the smallest, can also be considered. This can be
referred to as the \textit{bigger is better} version of the GPL. This
version of the model might be appropriate for modelling sports where
the latent variables represent a performance or score and the larger
the score the better.

The bigger is better version of the GPL model can be fitted by simply
reversing the ordering in $\vec{y}$
($\vec{y}_i^{r}=\text{rev}(\vec{y}_i)$ for $i = 1,\ldots,n$, where
$\text{rev}(\vec{x})$ reverses the order of the elements in the vector
$\vec{x}$), making the appropriate change to $\vec{s}$
($\vec{s}_i^r=s_{i n_i}-\text{rev}(\vec{s}_i)$ for $i = 1,\ldots,n$),
and then analysing $D=\{\vec{y}_i^r,\vec{s}_i^r\}_{i=1}^n$ under the
usual (smaller is better) GPL model.  For example, if
$\vec{y}=(1,2,4,5,3)'$ and $\vec{s}=(1,2,3,3,4)'$ then  
$\vec{y}^{r}=(3,5,4,2,1)'$ and $\vec{s}^r=(1,2,2,3,4)'$.  The
interpretation of the parameters changes, however. Now small values of
$\theta$ are good. Due to the simple act of reversing the data the
bigger is better version of the GPL model shall be referred to
henceforth as the \textit{reverse GPL} model.

\subsection{Top rankings on subsets of entities: the most general case}
\label{sec:top}

Suppose in comparison $i$, the top $m_i$ entities out of the $n_i$
entities in $\mathcal{K}_i\subseteq \mathcal{K}$ are reported. For
example, with preference data it is often easier to give a rank
ordering for the top $m<n_i$ entities than it is to give a full ranking
of all $n_i\leq K$ entities, especially when the number of entities is
large. Similarly, in a sporting context with multiple participants,
interest and therefore reported results may focus on the top
positions rather than the complete ranking of all
competitors. Fortunately, the GPL model can naturally handle such
scenarios. Note that this is the most general scenario and contains
complete rank orderings as a special case.

Let $\vec{y}_i$ be such that $y_{i1},\ldots,y_{im_i}$ contain
the top $m_i$ entities in order, and the remaining $n_i-m_i$ elements
in $\vec{y}_i$ are the un-ranked entities in arbitrary order.
Similarly, let $\vec{s}_i$ be such that $s_{i1},\ldots,s_{im_i}$
contain the ordered set indicators for the top $m_i$ entities, and the
remaining $n_i-m_i$ elements in $\vec{s}_i$ are assigned to the
hypothetical $s_{i m_i}+1$th set. Assigning all un-ranked entities to
the same set is purely for notational convenience; if they were to be
ranked then they need not be restricted to being tied. The key point
is that the set indicators $\vec{s}_i$ identify all entities not in
the top $m_i$.

Another scenario might be that the $n_i$ entities have been ranked but
retrospectively a top $q_i<n_i$ is required. In this case the ordered
set containing the $q_i$th ordered entity is found and then all
entities in that set and the higher ranked sets would be included in
the data, that is $m_i=\sum_{j=1}^{n_i} \mathbb{I}(s_{ij}\leq
s_{iq_i})$. For example, consider the running example with
$\vec{y}=(1,2,4,5,3)'$ and $\vec{s}=(1,2,3,3,4)'$ and suppose the top
$q=3$ entities are required. The $q=3$rd ranked entity, $4$, is in the
third ranked set, so all entities in the third and
higher ranked sets must be considered;  this corresponds to  the
entities $1,2,4,5$, and therefore this would
be a top $m=4$ rank ordering.

As a concrete example with the same data, suppose that $q=2$, then $m=2$
as the second ranked entity is in the second set and only two entities
are ranked in the top two sets. Therefore, the probability of
observing $\vec{y},\vec{s}$ given
$\vec{\theta}=(\theta_1,\ldots,\theta_K)'$ and $m=2$ is defined to be
the probability of the latent geometric random variables
$\vec{W}=(W_1,\ldots,W_K)'$ taking the ordering $\vec{y}$ with ties
defined by~$\vec{s}$, up to the entity ranked in $m$th position. That
is,
\begin{align*}
  \Pr(\vec{Y}=(1,2,4,5,3)',&\ \vec{S}=(1,2,3,3,4)'\,|\,\vec{\theta},m=2)\\
  &=\Pr(W_{1} < W_{2} < \left\{W_{3},W_{4},W_{5}\right\}\,|\,\vec{\theta})\\
  &=\Pr(W_{1}< \min \{W_2,W_3,W_4,W_5\}\,|\,\vec{\theta})\Pr(W_{2}< \min
  \{W_3,W_4,W_5\}\,|\,\vec{\theta})\\
 &=\prod_{j=1}^m \Pr(W_{y_j}< \min_{\ell>j}
  W_{y_\ell}\,|\,\vec{\theta})^{1-\mathbb{I}(s_{j}=s_{j+1})} 
\Pr(W_{y_j}=\min_{\ell>j}
  W_{y_\ell}\,|\,\vec{\theta})^{\mathbb{I}(s_{j}=s_{j+1})}.
\end{align*}

Therefore, in general, for $1\leq m_i \leq n_i$ 
\begin{align*}
\Pr(\vec{Y}_i&=\vec{y}_i,\vec{S}_i=\vec{s}_i|\vec{\theta},m_i)\\
&= \prod_{j=1}^{m_i^\star} \Pr(W_{iy_{ij}} < \min\left\{W_{iy_{i\ell}}\right\}_{\ell=j+1}^{n_i}|\vec{\theta})^{1-\mathbb{I}(s_{ij}=s_{i,j+1})}\Pr(W_{iy_{ij}} = \min\left\{W_{iy_{i\ell}}\right\}_{\ell=j+1}^{n_i}|\vec{\theta})^{\mathbb{I}(s_{ij}=s_{i,j+1})}\\
&=\prod_{j=1}^{m_i^\star} \frac{\theta_{y_{ij}}\left\{ \prod_{\ell=j+1}^{n_i}(1-\theta_{y_{i\ell}})\right\}^{1-t_{ij}}\left\{ 1-\prod_{\ell=j+1}^{n_i}(1-\theta_{y_{i\ell}})\right\}^{t_{ij}}}{1-\prod_{\ell=j}^{n_i}(1-\theta_{y_{i\ell}})},
\end{align*}
where $m_i^\star=\min(m_i,n_i-1)$.  Put simply, the usual product of
terms is truncated at $m_i^\star$.

\subsubsection{Likelihood}
\label{sec:like-top}

Let $\vec{m}=(m_1,\ldots,m_n)'$ denote the collection of truncation
levels of each of the $n$ orderings. Assuming that these rank
orderings are independent, the likelihood function for $\vec{\theta}$ is
\begin{align}
L(\vec{\theta}|D,\vec{m})\equiv p(D|\vec{\theta},\vec{m}) & = \prod_{i=1}^n \Pr(\vec{Y}_i=\vec{y}_i,\vec{S}_i=\vec{s}_i|\vec{\theta},m_i)\notag\\
& = \prod_{i=1}^n \prod_{j=1}^{m_i^\star} \frac{\theta_{y_{ij}}\left\{ \prod_{\ell=j+1}^{n_i}(1-\theta_{y_{i\ell}})\right\}^{1-t_{ij}}\left\{ 1-\prod_{\ell=j+1}^{n_i}(1-\theta_{y_{i\ell}})\right\}^{t_{ij}}}{1-\prod_{\ell=j}^{n_i}(1-\theta_{y_{i\ell}})}\notag\\
&=\frac{\prod_{k=1}^K \theta_k^{w_k}(1-\theta_k)^{d_k}}{\prod_{i=1}^n
  \prod_{j=1}^{v_i}\left\{
    1-\prod_{\ell\in\mathcal{R}_{ij}}(1-\theta_{\ell})\right\}}
\label{eq:like-top}
\end{align}
where $v_i=s_{i m_i^\star}$ denotes the number of
ordered sets containing the top $m_i$ entities but not including the
$s_{in_i}$th ordered set if it does not involve a
tie. Let $\delta_{ijk}=\mathbb{I}(k \in
\mathcal{R}_{ij})$ for $k\in\mathcal{K}, i=1,\ldots,n,
j=1,2,\ldots,s_{i n_i}$, be an indicator variable which 
is equal to 1 if entity $k$ is ranked at least as low as the
$j$th set in the $i$th comparison, and is equal to 0 otherwise. 
Then    $d_k= \sum_{i=1}^n\sum_{j=2}^{s_{in_i}}\delta_{ijk}$ denotes
the number of sets ranked higher than the set that contains entity
$k$ (alternatively, $d_k= \sum_{i=1}^n
\sum_{j=1}^{n_i}\left\{(s_{ij}-1)\mathbb{I}(y_{ij}=k)\right\}$);  
$w_k =  \sum_{i=1}^n\sum_{j=1}^{v_i}\delta_{ijk} -d_k$ 
denotes the number of comparisons in which entity $k$ appears in the
top $m_i$ (and is not in position $n_i$ on
its own). Note also that $c_k=\sum_{i=1}^n \sum_{j=1}^{n_i}
\mathbb{I}(y_{ij}=k)$ denotes the number of comparisons involving
entity $k$. In the simple illustrative example, $m=K$, 
$\vec{w}=(w_1,\ldots,w_K)'=(1,1,0,1,1)'$,
$\vec{c}=(c_1,\ldots,c_K)'=(1,1,1,1,1)'$,
$\vec{d}=(d_1,\ldots,d_K)'=(0,1,3,2,2)'$,
$\mathcal{R}_1=\left\{1,2,3,4,5\right\}$,
$\mathcal{R}_2=\left\{2,3,4,5\right\}$,
$\mathcal{R}_3=\left\{3,4,5\right\}$, $\mathcal{R}_4=\left\{3\right\}$,
and $v=3$. 

\subsubsection{Special cases}
\label{sec:special-top}

The GPL model is capable of dealing with the general case of top-$m$
rank orderings on subsets of entities. However, such data contain
several special cases. For example, if $m_i=n_i$ then it is a complete
rank ordering on subsets of entities and the likelihood contribution
is equivalent to that described in Section~\ref{sec:like-complete} and
given in \eqref{eq:like}. If $m_i=n_i=2$ then it is a paired
comparison.

\subsection{Properties and relationship to other models}
\label{sec:properties}

\subsubsection{Interpretation of parameters and basic properties}
\label{sec:int-prop}

Consider the standard GPL model. Then $E[W_k]=1/\theta_k$ and so large
values of $\theta_k$ lead to small values of $W_k$ and therefore a
position higher up the rank ordering. In other words, $\theta_k>\theta_i$ implies that
entity $k$ is more likely to be ranked higher than entity $i$. Let
$p_k$ denote the probability that entity $k$ is ranked first on its
own in a comparison involving all $K$ entities, then under the GPL
model  
\begin{equation}
p_k = \frac{\theta_k
  \prod_{i\in\mathcal{K}\setminus{\{k\}}}(1-\theta_i)}{1-\prod_{j\in\mathcal{K}}(1-\theta_j)},
\label{eq:entityk-first}
\end{equation}
where $\mathcal{K}\setminus{\{k\}}$ denotes the set $\mathcal{K}$ with
the $k$th element removed. Clearly this probability will increase as $\theta_k$ increases with
$\theta_i$ remaining fixed for $i\neq k$. 

The $\theta$ parameters also control the probability
of ties, with, in general, ties becoming more prevalent as
$\theta_k\rightarrow 1$. This is because $E[W_k]\rightarrow 1$ as
$\theta_k\rightarrow 1$, and ties are more likely with lower values of
the latent variables. Conversely, taking $\theta_k\rightarrow 0$
implies that ties become less prevalent and in fact the probability of
a tie tends to 0; this is formalised in Section~\ref{sec:PL} where it is
shown that the PL model is a limiting special case of the GPL model. 

This dual interpretation of $\theta$ explains why all $K$ parameters are
identifiable; it is not simply the ratio of the parameters  that impart
information (as in the Plackett-Luce model), but the overall scale of
the $\theta$s is also important. 

Consider the case of paired comparisons between entities $i$ and $j$,
then 
\begin{equation}
\begin{aligned}
\Pr(i \text{ beats }
j)&=\frac{\theta_i(1-\theta_j)}{1-(1-\theta_i)(1-\theta_j)}\\
\Pr(i \text{ ties } j)&=\frac{\theta_i\theta_j}{1-(1-\theta_i)(1-\theta_j)}.
\end{aligned}
\label{eq:gpl-pair}
\end{equation}
Figure~\ref{fig:wintielose} plots the probabilities of a win for $i$ and $j$
and a tie between them for a range of values of $\theta_i\in(0,1)$ when
$\theta_j\in\{0.1,0.5,0.9\}$. 
\begin{figure}[tbh]
\centering
\includegraphics[width=0.9\linewidth]{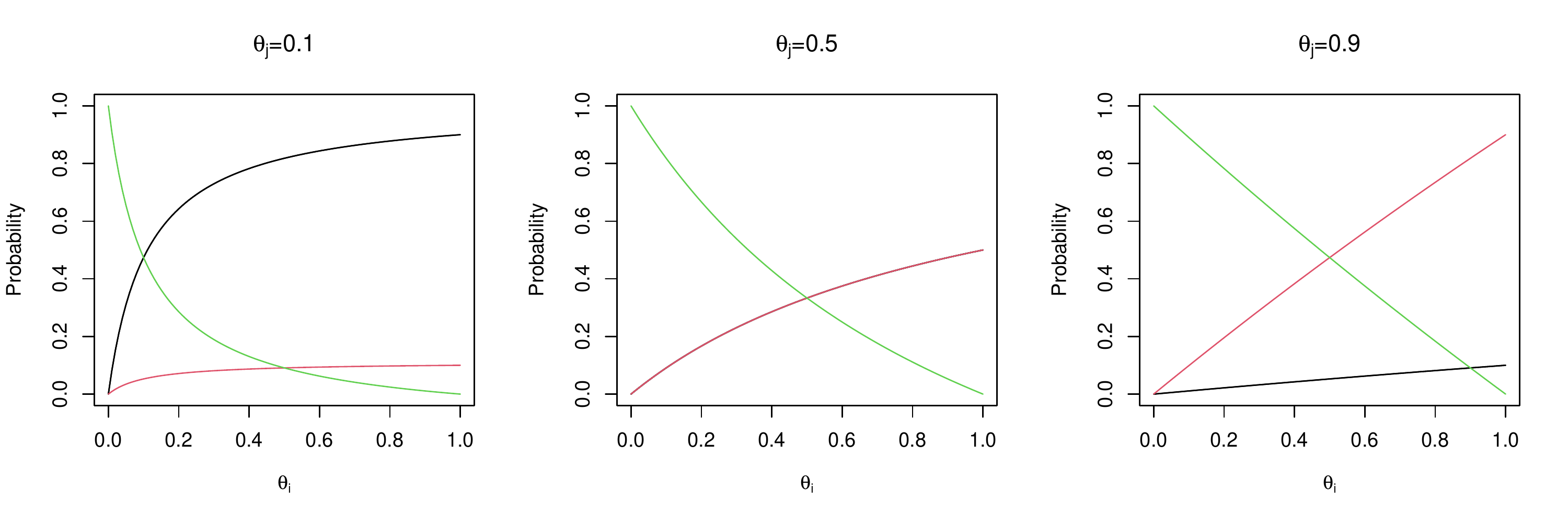} \\
\caption{Probabilities of a win for $i$ (black) a tie (red) or a win
  for $j$ in a paired comparison between entities $i$ and $j$ under
  the GPL model for three different $\theta_j$ values: $0.1$ (left),
  $0.5$ (centre) and $0.9$ (right). In the centre plot the black and
  red lines overlap. }
\label{fig:wintielose}
\end{figure}

Clearly the probability of a win for $i$
or a tie increase as $\theta_i$ increases. For fixed $\theta_i$ the probability of a tie
increases as $\theta_j$ increases.  When the two entities are evenly matched with $\theta_i=\theta_j=\theta$
then $\Pr(i \text{ ties } j)=\frac{\theta}{2-\theta}$.

\subsubsection{Plackett-Luce is a limiting special case}
\label{sec:PL}

It is intuitive to view the Plackett-Luce model as a limiting special
case of the GPL model as the exponential distribution can be seen as
the continuous limiting case of the geometric
distribution. Theorem~\ref{th:pl-limit} formalises this result. 

\begin{thm}
Consider the GPL model with entities in $\mathcal{K}$ and parameters
$\vec{\theta}$. Let $\theta_i=\lambda_i\theta_1$ for $i=1,\ldots,K$ 
such that $\lambda_1=1$ and $\lambda_i>0$ for $i=2,\ldots,K$. For any
ordering without ties $\vec{y}$, $\vec{s}=(1,\ldots,1)'$, under the GPL model 
\[
\lim_{\theta_1\rightarrow 0}\Pr(\vec{y},\vec{s}|\vec{\theta}) = \prod_{i=1}^{K-1} \frac{\lambda_i}{\sum_{j=1}^K\lambda_j},
\]
where the right-hand side is the Plackett-Luce
probability~\eqref{eq:pl}. Hence the PL model is a limiting special case of the GPL model. 

\label{th:pl-limit}
\end{thm}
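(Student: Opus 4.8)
The plan is to start from the closed-form expression for the probability of a complete rank ordering derived in Section~\ref{sec:complete-general} and specialise it to the no-ties case. Since $\vec{y}$ carries no ties, every $t_j=\mathbb{I}(s_j=s_{j+1})$ equals $0$, so that formula collapses to
\[
\Pr(\vec{y},\vec{s}\,|\,\vec{\theta}) = \prod_{j=1}^{K-1} \frac{\theta_{y_j}\prod_{\ell>j}(1-\theta_{y_\ell})}{1-\prod_{\ell\geq j}(1-\theta_{y_\ell})}.
\]
I would then substitute $\theta_{y_\ell}=\lambda_{y_\ell}\theta_1$ (with $\lambda_1=1$) and study the behaviour of each factor as $\theta_1\to 0$, exploiting the fact that the limit of a finite product is the product of the limits provided each factor converges.

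First I would treat the denominator of the $j$th factor. Expanding the finite product to first order gives $\prod_{\ell\geq j}(1-\lambda_{y_\ell}\theta_1)=1-\theta_1\sum_{\ell\geq j}\lambda_{y_\ell}+O(\theta_1^2)$, so that $1-\prod_{\ell\geq j}(1-\lambda_{y_\ell}\theta_1)=\theta_1\sum_{\ell\geq j}\lambda_{y_\ell}+O(\theta_1^2)$. The numerator is $\lambda_{y_j}\theta_1\prod_{\ell>j}(1-\lambda_{y_\ell}\theta_1)=\lambda_{y_j}\theta_1+O(\theta_1^2)$. Dividing, the single power of $\theta_1$ cancels and the $j$th factor equals $\frac{\lambda_{y_j}+O(\theta_1)}{\sum_{\ell\geq j}\lambda_{y_\ell}+O(\theta_1)}$, which tends to $\frac{\lambda_{y_j}}{\sum_{\ell\geq j}\lambda_{y_\ell}}$ as $\theta_1\to 0$.

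Taking the product over $j=1,\ldots,K-1$ then yields $\prod_{j=1}^{K-1}\frac{\lambda_{y_j}}{\sum_{\ell\geq j}\lambda_{y_\ell}}$, which is precisely the Plackett-Luce probability~\eqref{eq:pl}. The main (and essentially only) technical point requiring care is this leading-order cancellation: one must confirm that both the numerator and the denominator of each factor vanish to first order in $\theta_1$, with the stated leading coefficients, so that their ratio has a finite nonzero limit. This rests entirely on the elementary first-order expansion of $\prod_{\ell}(1-\lambda_{y_\ell}\theta_1)$, and because there are only finitely many entities there are no convergence subtleties — each of the $K-1$ factors converges individually, and the finite product of the limits equals the limit of the product, giving the claimed result and establishing the PL model as a limiting special case of the GPL model.
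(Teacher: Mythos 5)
Your proof is correct and is essentially the paper's argument: the paper works through the multi-stage representation, expanding each stage-wise ``entity $i$ wins outright'' probability to first order in $\theta_1$ so that the leading $\theta_1$ cancels between numerator and denominator, which is exactly the cancellation you carry out factor-by-factor in the product form of the likelihood. The only cosmetic difference is that the paper also records, as a preliminary result, that each stage-wise tie probability vanishes in the limit — a fact not needed for the specific no-ties ordering in the statement, and which your more direct route correctly omits.
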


\begin{proof}
  Consider the
  first stage of the ranking process where the entities with the
  smallest values of the latent geometric random variables
  ``win''. It can be shown that 
\begin{equation}
\lim_{\theta_1\rightarrow 0}\Pr(\text{entity $i$ wins outright}) =
\frac{\lambda_i}{\sum_{j=1}^K \lambda_j}
\label{eq:iwins-pl-limit}
\end{equation}
for all $i\in\mathcal{K}$, and 
\begin{equation}
\lim_{\theta_1\rightarrow 0}\Pr(\text{entities in $\vec{B}$ tie for the win}) = 0
\label{eq:ties-pl-limit}
\end{equation}
for all entities in $\vec{B}\subseteq\mathcal{K}$, where $\vec{B}$
contains more than one entity.
Equation~\eqref{eq:ties-pl-limit} implies the intuitive result that no
ties will be generated by the GPL generative model in the limit as
$\theta_1\rightarrow 0$.  The proof of the above preliminary results
\eqref{eq:iwins-pl-limit} and~\eqref{eq:ties-pl-limit} is elementary and
is based on the strategy
employed in \cite{BakerS21} for the simpler special case of paired
comparisons. Starting with the GPL
probability~\eqref{eq:entityk-first}
\begin{align*}
\Pr(\text{entity $i$ wins outright}) 
& = \frac{\theta_i
                    \prod_{j\neq i}(1-\theta_j)}{1-\prod_{j\in\mathcal{K}}(1-\theta_j)}\\
                  & = \frac{\lambda_i+\theta_1 f_1(\theta_1,\vec{\lambda})}{1+\sum_{j=2}^K\lambda_j +
                    \theta_1 f_2(\theta_1,\vec{\lambda}) }\\
                  & \rightarrow \frac{\lambda_i}{\sum_{j=1}^K\lambda_j},
\end{align*}
as $\theta_1\rightarrow 0$, noting that $\lambda_1=1$ by
definition. Here $f_1(\theta_1,\vec{\lambda})$ and
$f_2(\theta_1,\vec{\lambda})$ represent functions of $\theta_1$ and
$\vec{\lambda}$.
Similarly,
\begin{align*}
\Pr(\text{entities in $\vec{B}$ tie for the win}) & = \frac{\prod_{i \in \vec{B}}\theta_i
                    \prod_{j\notin \vec{B}}(1-\theta_j)}{1-\prod_{j\in\mathcal{K}}(1-\theta_j)}\\
                  & = \frac{\theta_1 f_3(\theta_1,\vec{\lambda})}{1+\sum_{i=2}^K\lambda_i +
                    \theta_1 f_2(\theta_1,\vec{\lambda}) }\\
                  & \rightarrow 0,
\end{align*}
as $\theta_1\rightarrow 0$, where $f_3(\theta_1,\vec{\lambda})$ also
represents a function of $\theta_1$ and
$\vec{\lambda}$.

Now consider the multi-stage representation of the
GPL model and apply the above two preliminary results
\eqref{eq:iwins-pl-limit} and~\eqref{eq:ties-pl-limit} to the probabilities of the
choices of entities at each stage. Taking the limit as
$\theta_1\rightarrow 0$ gives the desired result
\[
\lim_{\theta_1\rightarrow 0}\Pr(\vec{y},\vec{s}|\vec{\theta}) =
\prod_{i=1}^{K-1} \frac{\lambda_i}{\sum_{j=1}^K\lambda_j}.
\]
\end{proof}

A straightforward corollary to Theorem~\ref{th:pl-limit} is that the Bradley-Terry model is a limiting
special case of the GPL model for the case of paired comparisons; see \cite{BakerS21} for further
details. 

\subsubsection{Relationship to the Davidson model}
\label{sec:Davidson}

The Davidson model was introduced in \cite{Davidson70} as a
generalisation of the Bradley-Terry model which can handle ties in
paired comparisons. For a paired comparison between entities $i$ and
$j$ the 
win and tie probabilities under the Davidson model are
\begin{align*}
\Pr(i \text{ beats } j)&=\frac{\lambda_i}{\lambda_i+\lambda_j+\delta\sqrt{\lambda_i\lambda_j}}\\
\Pr(i \text{ ties } j)&=\frac{\delta\sqrt{\lambda_i\lambda_j}}{\lambda_i+\lambda_j+\delta\sqrt{\lambda_i\lambda_j}},
\end{align*}
where $\lambda_k>0$ are the ``worth'' parameters of each entity and
$\delta\geq 0$ controls the probability of a tie. When $\delta=0$ the
Bradley-Terry model is recovered.  For comparison, the corresponding probabilities
under the GPL model (see~\eqref{eq:gpl-pair}) 
can be written as
\begin{align*}
\Pr(i \text{ beats } j)&=\frac{\theta_i(1-\theta_j)}{\theta_i+\theta_j-\theta_i\theta_j}\\
\Pr(i \text{ ties } j)&=\frac{\theta_i\theta_j}{\theta_i+\theta_j-\theta_i\theta_j}.
\end{align*}
In fact, 
\cite{BakerS21} showed that the Davidson model is
equivalent to the reverse GPL model for paired comparisons under
certain restrictions on the parameters. However, in general, the
(reverse) GPL
model for paired comparisons and the Davidson model are different and
will lead to different inferences.  Note that, as in the Plackett-Luce
and Bradley-Terry models, only $K-1$ of the $\lambda$ parameters are
likelihood identifiable. Therefore both the GPL model and the Davidson
model have $K$ identifiable parameters. 

\subsubsection{Relationship to the Davidson-Luce model}
\label{sec:Davidson-Luce}

\cite{FirthKT19arxiv} describe an original model called the
Davidson-Luce (DL) model for choosing the ``winner'' from a set of
entities, where ties are permitted. The construction of the DL model
is based on the Davidson model for paired comparisons with ties. The
model was extended in \cite{turner2020modelling} to rank orderings
with ties by treating it as a multi-stage model with the basic DL
probability making up the stage-wise probabilities.

In
the Davidson-Luce (DL) model, entity $k$ has parameter $\lambda_k>0$ and a
$p$-way 
tie has parameter $\delta_p>0$, where $p$ denotes the
number of entities involved in the tie, and $\delta_1=1$. The Davidson-Luce model is a
multi-stage model with the probability at each stage 
derived as the model for the ``winning'' entities (including ties) in
a comparison involving $K>2$ entities. With $K$ entities there are
$2^{K}-1$ possible outcomes for the ``winner''. For example, when
$K=3$ the $2^3-1=7$ possible outcomes for the ``winner'' can be
denoted $\{1,2,3,(1,2),(1,3),(2,3),(1,2,3)\}$, where $i$ denotes that
entity $i$ was the sole winner, $(i,j)$ denotes that entities $i$ and
$j$ tied for first, and $(i,j,k)$ indicates a three-way tie for first
by entities $i,j,k$.

Generalising the probabilities under the Davidson model to more than
two entities,
the probabilities in the simple example with $K=3$ are 
\begin{align*}
\Pr(\text{$i$ wins})&\propto \lambda_i\\
\Pr(\text{$(i,j)$ win})&\propto \delta_2(\lambda_i\lambda_j)^{1/2}\\
\Pr(\text{$(i,j,k)$ win})&\propto \delta_3(\lambda_i\lambda_j\lambda_k)^{1/3},
\end{align*} 
and the normalising constant for each probability takes the form  
\[
\lambda_i+\lambda_j+\lambda_k+\delta_2(\lambda_i\lambda_j)^{1/2}+\delta_2(\lambda_i\lambda_k)^{1/2}+\delta_2(\lambda_j\lambda_k)^{1/2}+\delta_3(\lambda_i\lambda_j\lambda_k)^{1/3},
\]
the sum of the probabilities for each of the $2^K-1$ outcomes. 

Now consider the general case with $K$ entities.  Let
  $\vec{\omega}=(\omega_1,\ldots,\omega_K)'$ with
  $\omega_i=1$ if entity $i$ is in the winning set and $\omega_i=0$,
  otherwise.  So, for
  example, when $K=3$, the 7 possible outcomes listed above, can be
  collected in the set $\Omega=\{(1,0,0), (0,1,0), (0,0,1), (1,1,0),
  (1,0,1), (0,1,1), (1,1,1)\}$.  The probability of outcome $\vec{\omega}\in\Omega$ under the Davidson-Luce model is
\begin{equation}
\Pr(\vec{\omega})=
\bigfrac{\delta_{\sum_{i=1}^K\omega_i}\left(\prod_{i=1}^K\lambda_i^{\omega_i}\right)^{\frac{1}{\sum_{i=1}^K\omega_i}}}{\sum_{\vec{\omega}'\in\Omega}
\delta_{\sum_{i=1}^K\omega'_i}\left(\prod_{i=1}^K\lambda_i^{\omega'_i}\right)^{\frac{1}{\sum_{i=1}^K\omega'_i}}},
\label{eq:DL}
\end{equation}
where the sum in the denominator involves $|\Omega|=2^{K}-1$
terms. As noted by \cite{turner2020modelling} the sum in the denominator can be truncated at
the maximum observed tie order $p$, but as also noted by \cite{turner2020modelling} the
combinatorial explosion of terms means that data with ties of order
$p>4$ may be prohibitive. 

As the DL model is a multi-stage model the stage-specific probabilities
given by~\eqref{eq:DL} are multiplied together to obtain the joint
probability of a particular rank ordering with ties. 

In general, when there are $K$ entities the DL model has $K+(K-1)$,
that is, the $K$ worth parameters ($\lambda_k$) and the $K-1$ ties
parameters $\delta_p$.  Given that only $K-1$ of the worth parameters
are likelihood identifiable the DL model has $2(K-1)$ parameters.
Hence the DL model has more parameters than the GPL for $K>2$. 

The DL
model has several desirable features, such as intuitive interpretation
 of the parameters,  as described in \cite{FirthKT19arxiv} and
represents a richer, more complex, model than the GPL. However, the
added complexity may mean that the DL model does not scale well to
situations where there are ties involving many entities.
 
\section{Inference based on the GPL model}
\label{sec:inference}

Given some data $D$ it is of interest to make inferences about the
values of the parameters $\vec{\theta}$ of the GPL model. Assume the most general
scenario of top-$m$ rankings on subsets (which contains complete
rankings as a special case) as described in Section~\ref{sec:top}.
Then given the analytical expression for the likelihood
$L(\vec{\theta}|D)$ in~\eqref{eq:like-top} it is straightforward to
maximise the likelihood by using standard numerical methods, such as
those provided by the \texttt{optim} function in \textsf{R} \citep{R}.
Similarly, if adopting a Bayesian approach to inference \citep{BernardoS94} it is straightforward to code the Bayesian model
(including the prior distribution) in a probabilistic programming
language such as \textsf{Stan}~\citep{carpenter2017stan} or \textsf{JAGS}~\citep{plummer2003jags}, and such an approach will be
useful when exploring extensions of the basic GPL.

However, the latent variable formulation of the GPL model suggests
that data augmentation might be useful for inference and this is the
approach taken in this paper. The focus is on Bayesian inference based
on a simple but efficient Gibbs sampler. A simple
Expectation-Maximisation (EM) algorithm for obtaining
MAP and maximum likelihood estimates (MLEs) is also described later in
Section~\ref{sec:em}.  The main advantage of the Gibbs sampler over
other MCMC methods is that it does not
require any tuning.

\subsection{Bayesian inference}
\label{sec:bayes}

\cite{CaronD12} showed how to construct efficient Gibbs samplers and
EM algorithms for the Bradley-Terry model and the Plackett-Luce model.
In what follows, their general approach is used to construct efficient
Gibbs samplers and EM algorithms for the GPL model.

\subsubsection{Latent variables and complete data likelihood}
\label{sec:latent}

For each comparison $i$ introduce $v_i$ latent variables, as follows. For
$i=1,\ldots,n$ and $j=1,\ldots,v_i$ generate
$Z_{ij}|D,\vec{\theta},\vec{m}\overset{\text{indep.}}\sim\text{Geom}(1-\prod_{\ell\in\mathcal{R}_{ij}}(1-\theta_\ell))$,
so that
\[
p(\vec{Z}|D,\vec{\theta},\vec{m})=\prod_{i=1}^n\prod_{j=1}^{v_i}\Pr(Z_{ij}=z_{ij}|D,\vec{\theta})=\prod_{i=1}^n\prod_{j=1}^{v_i}\left\{\prod_{\ell\in\mathcal{R}_{ij}}(1-\theta_\ell)\right\}^{z_{ij}-1}\left\{1-\prod_{\ell\in\mathcal{R}_{ij}}(1-\theta_\ell)\right\}. 
\] 

Here  $Z_{ij}=\min\{W_{i\ell}\}_{\ell\in\mathcal{R}_{ij}}$
represents the minimum of the original geometric latent variables for
the entities available at stage $j$ in the $i$th observation. Equivalently, $Z_{ij}$ is the number of Bernoulli trials from the $j-1$th
event to the $j$th event in the $i$th observation, where an event is a
success for \textit{any} of the entities and where the $0$th event
occurs at trial 0.

It follows that the complete data likelihood is
\begin{align*}
p(D,\vec{Z}|\vec{\theta},\vec{m})&=p(D|\vec{\theta},\vec{m})p(\vec{Z}|D,\vec{\theta},\vec{m})\\
&=\prod_{k=1}^K \theta_k^{w_k}
  (1-\theta_k)^{d_k}\prod_{i=1}^n\prod_{j=1}^{v_i}\prod_{\ell\in\mathcal{R}_{ij}}(1-\theta_\ell)^{z_{ij}-1}\\
&=\prod_{k=1}^K \theta_k^{w_k}
  (1-\theta_k)^{ \zeta_k-w_k},
\end{align*}
where $\zeta_k=\sum_{i=1}^n\sum_{j=1}^{v_i} \delta_{ijk}z_{ij}$.

\subsubsection{Prior distribution}
\label{sec:prior}

The form of the complete data likelihood suggests the following prior
specification. Prior uncertainty regarding the entity parameters $\theta_k$
for $k\in\mathcal{K}$ is expressed  
by independent beta distributions, $\theta_k \sim
\text{Beta}(a_k,b_k)$, where $a_k,b_k>0$, such that 
\[
p(\theta_k)=\frac{\Gamma(a_k+b_k)}{\Gamma(a_k)\Gamma(b_k)}\theta_k^{a_k-1}(1-\theta_k)^{b_k-1},
\]
and $\Gamma(x)=\int_0^\infty y^{x-1}e^{-y}\text{d}y$ denotes the gamma function. 

\subsubsection{Posterior distribution and full conditional distributions}
\label{sec:posterior}

Based on the prior distribution and the complete data likelihood, the full joint probability (density) of parameters, data and
latent variables is 
\begin{align*}
p(D,\vec{Z},\vec{\theta}|\vec{m})&=p(D,\vec{Z}|\vec{\theta},\vec{m})p(\vec{\theta})\\
                         &=\prod_{k=1}^K\frac{\Gamma(a_k+b_k)}{\Gamma(a_k)\Gamma(b_k)}\theta_k^{a_k+w_k-1}(1-\theta_k)^{b_k+\zeta_k-w_k-1},
\end{align*}
and therefore the full conditional distribution for $\theta_k$ is 
\[
p(\theta_k|D,\vec{Z},\vec{\theta}_{-k},\vec{m})\propto
\theta_k^{a_k+w_k-1}(1-\theta_k)^{b_k+\zeta_k-w_k-1}, \qquad k=1,\ldots,K,
\]
which implies that
\[
\theta_k|D,\vec{Z},\vec{\theta}_{-k},\vec{m} \sim \text{Beta}(a_k+w_k,b_k+\zeta_k-w_k), \qquad k=1,\ldots,K, 
\]
where $\vec{\theta}_{-k}$ denotes the vector $\vec{\theta}$ with the
$k$th element removed.

\subsubsection{Gibbs sampler}
\label{sec:gibbs}

The following simple Gibbs sampler can be used to sample
$\vec{\theta},\vec{Z}$ with joint posterior density
$p(\vec{\theta},\vec{Z}|D,\vec{m})$. Starting with some initial value
$\vec{\theta}^{(0)}$, for $t=1,2,\ldots$ 
\begin{itemize}
\item sample $Z_{ij}^{(t)}|D,\vec{\theta}^{(t-1)},\vec{m} \sim
\text{Geom}\left(1-\prod_{\ell\in\mathcal{R}_{ij}}\left\{1-\theta_\ell^{(t-1)}\right\}\right)$
for $i=1,2,\ldots,n$ and $j=1,\ldots,v_i$,
\item sample $\theta_k^{(t)}
  |D,\vec{Z}^{(t)},\vec{m} \sim
  \text{Beta}\left(a_k+w_k,b_k+\zeta_k^{(t)}-w_k\right)$ for $k=1,\ldots,K$,
\end{itemize}
where $\zeta_k^{(t)}=\sum_{i=1}^n\sum_{j=1}^{v_i} \delta_{ijk}z_{ij}^{(t)}$.

\subsubsection{EM algorithm}
\label{sec:em}

A simple EM algorithm for finding the posterior mode (maximum
\textit{a posteriori} (MAP) estimator) can also be derived.  At
iteration $t$ of the EM algorithm set 
\[
\vec{\theta}^{(t)}=\underset{\vec{\theta}}{\text{argmax}}\,Q(\vec{\theta},\vec{\theta}^{(t-1)}),
\]
where
\begin{align*}
Q(\vec{\theta},\vec{\theta}^*)&=E_{\vec{Z}|D,\vec{\theta}^*}[\log
\pi(D,\vec{Z},\vec{\theta}|\vec{m})]\\
&=\sum_{k=1}^K
  w_k\log(\theta_k)+\left(\sum_{i=1}^n\sum_{j=1}^{v_i}\delta_{ijk}E[Z_{ij}|D,\vec{\theta}^*,\vec{Z}\setminus{\{Z_{ij}\}}]-w_k\right)\log(1-\theta_k)\\
  &\qquad+\sum_{k=1}^K(a_k-1)\log(\theta_k)+(b_k-1)\log(1-\theta_k)+\log(\Gamma(a_k+b_k))-\log(\Gamma(a_k))-\log(\Gamma(b_k))\\
&=C+\sum_{k=1}^K
  (a_k+w_k-1)\log(\theta_k)+\left(b_k+\sum_{i=1}^n\sum_{j=1}^{v_i}\frac{\delta_{ijk}}{1-\prod_{\ell\in\mathcal{R}_{ij}}(1-\theta_\ell^*)}-w_k-1\right)\log(1-\theta_k),
\end{align*}
where $C$ does not depend on $\vec{\theta}, \vec{\theta}^*$. 

Maximising $Q(\vec{\theta},\vec{\theta}^*)$ implies that at iteration $t$ of the EM algorithm  set
\begin{equation}
\theta_k^{(t)}
=\frac{a_k+w_k-1}{a_k+b_k+\sum_{i=1}^n\sum_{j=1}^{v_i}\frac{\delta_{ijk}}{1-\prod_{\ell\in\mathcal{R}_{ij}}\left\{1-\theta_\ell^{(t-1)}\right\}}-2},
\label{eq:em-map}
\end{equation}
for $k=1,\ldots,K$. Each iteration will lead to an increase in the log
posterior density and hence  
$\lim_{t \rightarrow\infty}\vec{\theta}^{(t)}=\hat{\vec{\theta}}$, the
unique MAP estimate. 

The MLE can be obtained by setting $a_k=1$, $b_k=1$
in~\eqref{eq:em-map};  the MLE is equivalent to the MAP estimator when
$\theta_k\overset{\text{indep.}}{\sim} U(0,1)$ for all $k$.

\subsection{Algorithms for paired comparisons}
\label{sec:gibbs-em-pairs}

For the special case of paired comparisons there are alternative
algorithms --- again inspired by those in \cite{CaronD12} --- which may be
more efficient in certain circumstances. 

\subsubsection{Likelihood based on paired comparisons}
\label{sec:pairs-like}

Consider the case of $n$ paired comparisons between $K$
  entities; therefore $m_i=n_i=2$ for all $i=1,\ldots,n$. Starting
  from~\eqref{eq:like-top}, the likelihood can be written as  
\begin{align*}
L(D|\vec{\theta}) &=\frac{\prod_{i\in\mathcal{K}} \theta_i^{w_i}(1-\theta_i)^{d_i}}{\prod_{i=1}^n
  \prod_{j=1}^{v_i}\left\{
    1-\prod_{\ell\in\mathcal{R}_{ij}}(1-\theta_{\ell})\right\}}\\
&=\frac{\prod_{i\in\mathcal{K}} \theta_i^{w_i}(1-\theta_i)^{c_i-w_i}}{\prod_{\{i,j\in \mathcal{K}; i<j\}}
  \left\{
    1-(1-\theta_i)(1-\theta_j)\right\}},
\end{align*}
where  $w_i$ can now be interpreted as the number
of wins plus ties for entity $i$ and $d_i$ can be interpreted as the
number of losses for entity $i$ (with $c_i$ still denoting the number
of comparisons involving entity $i$). Note that for the reverse GPL $w_i$
becomes the number of ties and losses for entity $i$ and $d_i$ becomes
the number of (outright) wins for entity $i$. 

\subsubsection{Alternative latent variables and complete data likelihood}
\label{sec:pairs-latent}

The usual latent variables for the GPL model when $n_i=2$ are, for $i=1,\ldots,n$
\[
Z_i | \vec{D},\vec{\theta} \sim \text{Geom}\left(1-\prod_{\ell\in\mathcal{R}_{i1}}(1-\theta_\ell)\right)
\]
that is,
\[
Z_i | \vec{D},\vec{\theta} \sim \text{Geom}(1-(1-\theta_{y_{i1}})(1-\theta_{y_{i2}})).
\]
Here
$Z_i=\text{min}(W_{iy_{i1}},W_{iy_{i2}})$. 

Now let
$n_{ij}$ denote the number of comparisons involving entities $i$ and
$j$. Also let
\[
\mathcal{Z}_{ij}=\sum_{\ell=1}^n Z_i\mathbb{I}(i\in\mathcal{R}_{\ell
  1} \cap j\in\mathcal{R}_{\ell 1})
\]
for $\{i,j\in \mathcal{K}; i<j\}$ denote the sum of the latent
geometric variables for comparisons between entities $i$ and $j$. Then for $\{i,j\in \mathcal{K}; i<j\}$,
\[
\mathcal{Z}_{ij}|\vec{D},\vec{\theta} \sim \text{NegBin}(n_{ij},1-(1-\theta_i)(1-\theta_j))
\]
independently for $n_{ij}>0$ and $\mathcal{Z}_{ij}=0$ if $n_{ij}=0$. 

The
full conditional probability
mass function of $\mathcal{Z}_{ij}$ for $\{i,j\in \mathcal{K};
(i<j)\cap n_{ij}>0\}$ is 
\[
\Pr(\mathcal{Z}_{ij}=z_{ij}|\vec{D},\vec{\theta})=\binom{z_{ij}-1}{n_{ij}-1}\{1-(1-\theta_i)(1-\theta_j)\}^{n_{ij}}\{(1-\theta_i)(1-\theta_j)\}^{z_{ij}-n_{ij}},
\]
for $z_{ij}=n_{ij},n_{ij}+1,\ldots$ and $n_{ij}=1,2,\ldots$. 

These negative binomial latent variables can be thought of as the sum
over all comparisons of the ``non-losing'' (for the standard smaller
is better GPL model, or ``non-winning'' for the reverse GPL model)
latent geometric random variables. This is also analogous to the use
of gamma latent variables in the Bayesian analysis of the
Bradley-Terry model proposed by \cite{CaronD12}. This formulation
involves at most $K(K-1)/2$ latent variables compared to $n$ in the
previous formulation of Section~\ref{sec:latent}.  The negative
binomial framework will be computationally and statistically more
efficient whenever $K(K-1)/2 < n$.  For a single round robin
tournament with $K$ competitors there are $n=K(K-1)/2$ comparisons and
so no efficiency gains should be expected. However, for
something like an English Premier League football season which is round robin
both home and away there are $n=K(K-1)$ comparisons and this negative
binomial framework would be expected to be more efficient. 

The full conditional for $\vec{\mathcal{Z}}$ (the collection
of $\mathcal{Z}_{ij}$ for $\{i,j\in \mathcal{K};
(i<j)\cap n_{ij}>0\}$) is 
\begin{align*}
p(\vec{\mathcal{Z}}&|D,\vec{\theta})\\ 
&= \prod_{\{i,j\in \mathcal{K};
(i<j)\cap
n_{ij}>0\}}\binom{z_{ij}-1}{n_{ij}-1}\{1-(1-\theta_i)(1-\theta_j)\}^{n_{ij}}\{(1-\theta_i)(1-\theta_j)\}^{z_{ij}-n_{ij}}\\
&=\left[\prod_{i\in\mathcal{K}}(1-\theta_i)^{\sum_{j>i|n_{ij}>0}z_{ij}-n_{ij}+\sum_{j<i|n_{ij}>0}z_{ji}-n_{ji}}  \right]\prod_{\{i,j\in \mathcal{K};
(i<j)\cap
n_{ij}>0\}}\binom{z_{ij}-1}{n_{ij}-1}\{1-(1-\theta_i)(1-\theta_j)\}^{n_{ij}}.
\end{align*}

Therefore the complete data likelihood is
\begin{align*}
p(D,\vec{\mathcal{Z}}|\vec{\theta})&=p(D|\vec{\theta})p(\vec{\mathcal{Z}}|D,\vec{\theta})\\
&=\prod_{i\in\mathcal{K}}\theta_i
^{w_i}(1-\theta_i)^{-w_i+\sum_{j>i|n_{ij}>0} z_{ij} + \sum_{j<i|n_{ij}>0} z_{ji}}\prod_{\{i,j \in \mathcal{K} ; (i < j) \cap (n_{ij}>0)\}} \binom{z_{ij}-1}{n_{ij}-1},
\end{align*}
as the number of comparisons involving entity $i$, $c_i=\sum_{j>i}n_{ij}+\sum_{j<i}n_{ji}$.

\subsubsection{Alternative Gibbs sampler for paired comparisons}
\label{sec:pairs-gibbs}

Setting $\xi_i=\sum_{j>i|n_{ij}>0} z_{ij} + \sum_{j<i|n_{ij}>0}
z_{ji}$ for notational conciseness, the full joint probability
(density) of parameters, latent variables and data can be written explicitly as 
\begin{align*}
p(D,\vec{\mathcal{Z}},\vec{\theta})&=p(D,\vec{\mathcal{Z}}|\vec{\theta})p(\vec{\theta})\\
&=\prod_{i\in\mathcal{K}}\frac{\Gamma(a_i+b_i)}{\Gamma(a_i)\Gamma(b_i)}\theta_i^{a_i+w_i-1}(1-\theta_i)^{b_i+\xi_i-w_i-1}\prod_{\{i,j \in \mathcal{K}; (i < j) \cap (n_{ij}>0)\}} \binom{z_{ij}-1}{n_{ij}-1},
\end{align*}
whence the full conditional density for $\theta_i$ is clearly
\[
p(\theta_i|D,\vec{\mathcal{Z}},\vec{\theta}_{-i}) \propto  \theta_i^{a_i+w_i-1}(1-\theta_i)^{b_i+\xi_i-w_i-1},
\]
therefore
\[
\theta_i|D,\vec{\mathcal{Z}},\vec{\theta}_{-i} \sim \text{Beta}\left(a_i+w_i,b_i+\xi_i-w_i\right).
\]

The following simple Gibbs sampling algorithm can be used to sample
values of $\vec{\theta},\vec{\mathcal{Z}}$ with joint
probability 
density function $p(\vec{\theta},\vec{\mathcal{Z}}|D)$. Starting with some initial value
$\vec{\theta}^{(0)}$, for $t=1,2,\ldots$ 
\begin{itemize}
\item sample $\mathcal{Z}_{ij}^{(t)}|D,\vec{\theta}^{(t-1)} \sim
\text{NegBin}\left(n_{ij},1-(1-\theta_{i}^{(t-1)})(1-\theta_{j}^{(t-1)})\right)$
for $\{i,j \in \mathcal{K} ; (i < j) \cap (n_{ij}>0)\}$ and
compute $\xi_i^{(t)}=\sum_{j>i|n_{ij}>0} z_{ij}^{(t)} + \sum_{j<i|n_{ij}>0}
z_{ji}^{(t)}$ for $i\in\mathcal{K}$, 
\item sample $\theta_i^{(t)}
  |D,\vec{\mathcal{Z}}^{(t)},\vec{\theta}^{(t-1)}_{-i} \sim
  \text{Beta}\left(a_i+w_i,b_i+\xi_i^{(t)}-w_i\right)$ for $i\in\mathcal{K}$.
\end{itemize}

\subsubsection{Alternative EM algorithm for paired comparisons}
\label{sec:pairs-em}

Following the derivation of the EM algorithm in Section~\ref{sec:em},
an alternative EM algorithm for finding the MAP estimate based on the negative binomial latent
variables of Section~\ref{sec:pairs-latent} has updates
\begin{align}
\theta_i^{(t)}
&=\frac{a_i+w_i-1}{a_i+b_i+\sum_{j>i|n_{ij}>0}\frac{n_{ij}}{1-(1-\theta_i^{(t-1)})(1-\theta_j^{(t-1)})} + \sum_{j<i|n_{ij}>0}
  \frac{n_{ji}}{1-(1-\theta_j^{(t-1)})(1-\theta_i^{(t-1)})}-2},
\label{eq:pairs-em}
\end{align}
for $i\in\mathcal{K}$. It follows that the MLE can be obtained when setting $a_i=b_i=1$ into
the above update equation~\eqref{eq:pairs-em}.

\subsection{Posterior and posterior predictive summaries}
\label{sec:post-sum}

Suppose $N$ samples from the posterior distribution
$p(\vec{\theta}|D)$ are available then the usual estimates of posterior
quantities of interest can be obtained. The main quantities used in
the examples of Section~\ref{sec:examples} are described below. 

\subsubsection{Marginal posterior means and credible intervals}
\label{sec:postmean}
 
Posterior means
$\bar{\vec{\theta}}=(\bar{\theta}_1,\ldots,\bar{\theta}_K)'$ are
estimated from the posterior samples $\{\vec{\theta}^{(t)}\}_{t=1}^N$
in the usual way, $\bar{\theta}_k = \frac{1}{N}\sum_{t=1}^N
\theta_k^{(t)}$, for $k\in\mathcal{K}$. Equal-tailed 95\% credible
intervals are also estimated from the posterior samples in the usual
way, with the upper (lower) limit being estimated by the upper (lower)
2.5\% empirical quantile.  

\subsubsection{Determining the best or most preferred entity}
\label{sec:bestentity}

If a purpose of the analysis is to determine the ``best'' entity then
there are several ways this can be approached. Recall that for the
standard \textit{smaller is better} GPL model the larger parameter
values lead to higher ranked entities. Therefore, the entity with the
largest posterior mean is an appropriate candidate for the best
entity. The entity with the largest posterior mode (MAP),
$\hat{\vec{\theta}}=(\hat{\theta}_1,\ldots,\hat{\theta}_K)'$, is also an
appropriate candidate. For the reverse GPL model, the smallest values
of the parameter values are best. 

There are, however, benefits to using the posterior mean when using
the standard GPL model, as explained below. The entity with the
largest posterior mean is also the one that maximises the
posterior predictive probability of being ranked first on its own in a
complete rank ordering under the GPL model. This can be seen by
examining the form of the probability of entity $k$ being ranked first
on its own in a complete rank ordering under the GPL model which was given
earlier in~\eqref{eq:entityk-first} and is reproduced below,
\begin{equation*}
p_k = \frac{\theta_k
  \prod_{i\in\mathcal{K}\setminus{\{k\}}}(1-\theta_i)}{1-\prod_{j\in\mathcal{K}}(1-\theta_j)}
\end{equation*}
where $\mathcal{K}\setminus{\{k\}}$ denotes the set $\mathcal{K}$ with
the $k$th element removed.  
Therefore, an estimate of the posterior
predictive probability $p_k|D$ is $\hat{p}_{k}=\frac{1}{N}\sum_{t=1}^N
p_k^{(t)}$, with $p_k^{(t)}$ obtained by
calculating~\eqref{eq:entityk-first} based on the sampled values
$\vec{\theta}^{(t)}$ at iteration $t$.

Under the reverse GPL model an explicit expression for the probability
of entity $k$ being ranked in $K$th position in a complete rank
ordering (and hence ``best'') is not available in closed form. This
probability can be estimated by simulating synthetic complete rank
orderings from the posterior predictive distribution and calculating
the proportion of the synthetic posterior predictive samples in which
entity $k$ is ranked in $K$th position on its own. Sampling from the
posterior predictive distribution like this allows a plethora of other
quantities of interest to be estimated.

\subsubsection{Determining a total order of the entities: rank aggregation}
\label{sec:total-order}

If a total order (an ordering without ties) of a subset of entities
$\mathcal{K}^\star \subseteq\mathcal{K}$ is desired then the ordering
of the posterior means of the entities in $\mathcal{K}^\star$ from
largest to smallest,
$\hat{\vec{y}}=\text{order}_{\downarrow}\{\bar{\theta}_k\}_{k\in\mathcal{K}^\star}$,
provides a simple estimate under the GPL model. For the reverse GPL
model the
ordering is taken from smallest to largest.  From a Bayesian
perspective, the ordering of the entities which has maximum posterior
predictive probability is the optimal total order, since no other
total order has higher posterior predictive probability. Given the
number of possible total orders is $|\mathcal{K}^\star|!$ an
exhaustive search is only possible for small $|\mathcal{K}^\star|$.
For moderate $|\mathcal{K}^\star|$ algorithms such as the cyclic
coordinate ascent algorithm described in \cite{johnson2020revealing}
are effective; see \cite{johnson2020revealing} and
\cite{johnson2022bayesian} for more discussion of posterior predictive
summaries for ranked data without ties. Ordering the posterior means
of the parameters can be considered as a first order approximation to
the modal order under the posterior predictive distribution, is far
simpler to compute and scales up easily to large values of $K$; see
Section~2.2 in the Supplementary Material of
\cite{johnson2020revealing} for further details. Ordering the
posterior means of the parameters will therefore be the primary method
for estimating an optimal total order in the results presented in Section~\ref{sec:examples}.

\section{Examples}
\label{sec:examples}

\subsection{Introduction}
\label{sec:ex-intro}

In this section three real datasets are used to illustrate several
features of the GPL model and the inference algorithms described in
Section~\ref{sec:inference}. The examples have been chosen to cover a
range of data types --- paired comparisons, complete and top-$m$ rank
orderings --- involving either preferences or competition. The
datasets cover a range of values of $n$, the number of observations,
and $K$, the number of entities; see Table~\ref{tab:summary-real} for
a summary.

\begin{table}[tbh]
\centering
\begin{tabular}{lrrlrrr}
\hline
Data set & $n$ & $K$ & Type & CPU(s) & Min(ESS) & Min(ESS)/CPU \\ 
\hline
Puddings & 745 & 6 & Paired & 3.6 & 2839 & 789 \\
NASA & 10 & 32 & Complete & 13 & 6196 & 477 \\
Golf & 46 & 631 & Top-$m$ on subsets & 1158 & 7456 & 6 \\
\end{tabular}
\caption{Summary of datasets used in the examples together with
  example timings and effective sample sizes based on 10000 Gibbs sampler
  iterations for the GPL model.}
\label{tab:summary-real}
\end{table}

In order to avoid repetition of similar details, the general choices
made in obtaining the results are detailed below. Clearly these are not
generic recommendations and choices should be tailored to specific
examples. 

\subsubsection{Computational details}
\label{sec:ex-comp}

The Gibbs samplers and EM algorithms of Section~\ref{sec:inference}
were coded in \textsf{R}~\citep{R}; data and example code for reproducing
the presented results is available from \url{www.github.com/d-a-henderson/GPL}.
The code was run on a fairly standard desktop computer with an
Intel(R) Core(TM) i7-9700 CPU @ 3.00GHz, 18GB RAM.

\subsubsection{Prior distributions} 
\label{sec:ex-prior}

 For illustrative purposes the prior distribution for the parameters
 of the GPL model 
in each example was $\theta_k\overset{\text{indep.}}{\sim}
\text{Beta}(1,1)$, that is $a_k=b_k=1$ for $k=1,\ldots,K$. This
exchangeable specification with uniformly distributed parameters
represents a convenient baseline. Clearly
more specific prior information could be encoded if available.

\subsubsection{Gibbs sampling}
\label{sec:ex-gibbs}

In all cases, four chains were generated via the Gibbs sampler and the
potential scale reduction factor of \cite{GelmanR92} was computed
using the \texttt{gelman.diag} function in the \textsf{coda} package~\citep{coda} and gave values for
all parameters of 1.00 to two decimal places indicating no evidence of
non-convergence. Convergence and mixing were also assessed via the
usual visual and numerical checks. In all examples reported here very
little burn-in is required and in all cases discarding the first 10
iterations was sufficient. After the burn-in period each chain
was run for $N=10000$ iterations. Mixing was generally very good with
autocorrelations becoming indistinguishable from zero by lag 10 in the
worst case.  The results from a single chain of 10000 iterations are
presented.

\subsubsection{EM algorithm}
\label{sec:ex-em}

The EM algorithm was initialised at a random draw from the prior
distribution. It is not necessary to run the EM algorithm from
multiple starting points as the likelihood and therefore the posterior
distribution (under the prior in Section~\ref{sec:prior}) has a unique
mode. The algorithm is stopped when the mean squared difference
between parameter values at successive iterations is less than
$10^{-16}$.

\subsection{Paired comparisons: puddings}
\label{sec:puddings}

\subsubsection{Background and data}
\label{sec:mariner-background}

\cite{Davidson70} introduced and analysed a dataset of $n=745$ paired comparisons
involving $K=6$ brands of chocolate milk pudding. These data were also analysed recently by
\cite{turner2020modelling}. 

\subsubsection{Bayesian analysis via the GPL and reverse GPL}
\label{sec:puddings-GPL}

With $n=745$ being much greater than $K(K-1)/2=15$ the negative
binomial-based Gibbs sampler of Section~\ref{sec:pairs-gibbs} is
preferable to the geometric-based Gibbs sampler of
Section~\ref{sec:gibbs}.  For example, a single chain of 10010
iterations takes approximately 34 seconds for the Gibbs sampler of
Section~\ref{sec:gibbs} but only approximately 3.6 seconds for the
Gibbs sampler of Section~\ref{sec:pairs-gibbs} and mixing of the two
chains is similar for these data.  Consequently, the
Gibbs sampler of Section~\ref{sec:pairs-gibbs} was used to sample from
the posterior distribution to obtain the results presented here.
Table~\ref{tab:summary-real} shows, amongst other things, that the
minimum effective sample size (ESS, using the definition in the
\texttt{coda} package~\citep{coda} in \textsf{R}) was 2839.  The mixing of the chain
is illustrated in Figure~\ref{fig:pud-mix} which
shows a trace plot and an autocorrelation function plot for the
parameter corresponding to this minimum ESS and indicates that the
chain mixes well.
\begin{figure}[h]
\centering
\includegraphics[width=0.95\linewidth]{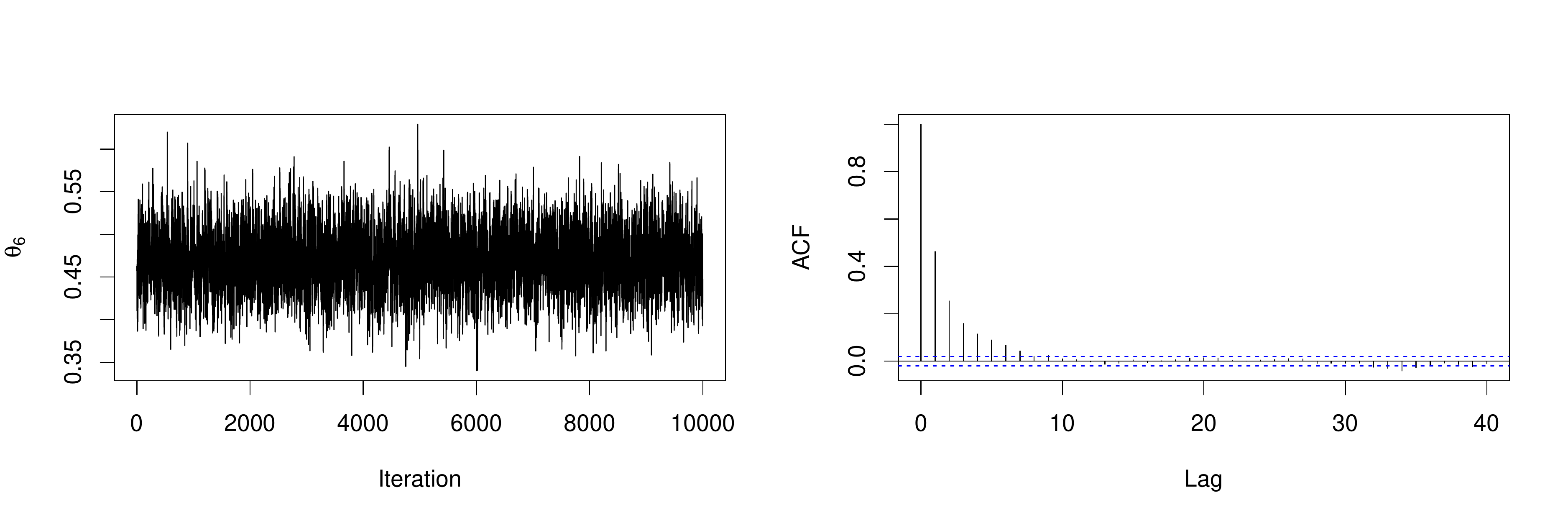}
\caption{Puddings data: Trace plot (left) and autocorrelation function plot (right)
  for 10000 posterior samples of $\theta_6$ obtained via the negative
  binomial-based Gibbs sampler of Section~\ref{sec:pairs-gibbs}. The
  samples have an ESS of 2839, the lowest of the $K=6$ parameters.}
\label{fig:pud-mix}
\end{figure}

Figure~\ref{fig:pud-postMAP} displays summaries of the marginal
posterior distributions of the parameters of the GPL model (left) and
the reverse GPL model (right), in
particular posterior means, modes and credible intervals. 
\begin{figure}[h]
\centering
\includegraphics[width=0.48\linewidth]{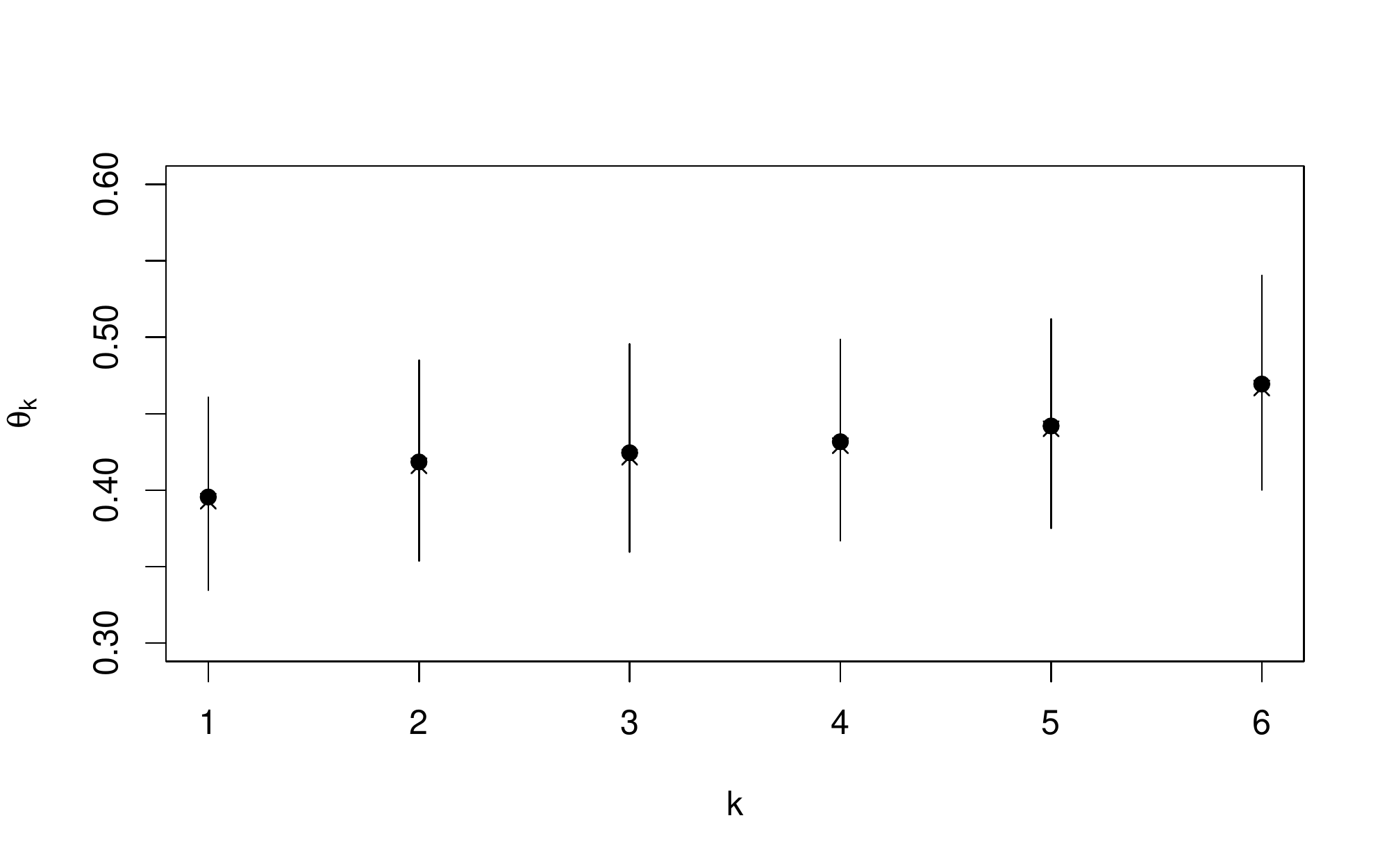}
\includegraphics[width=0.48\linewidth]{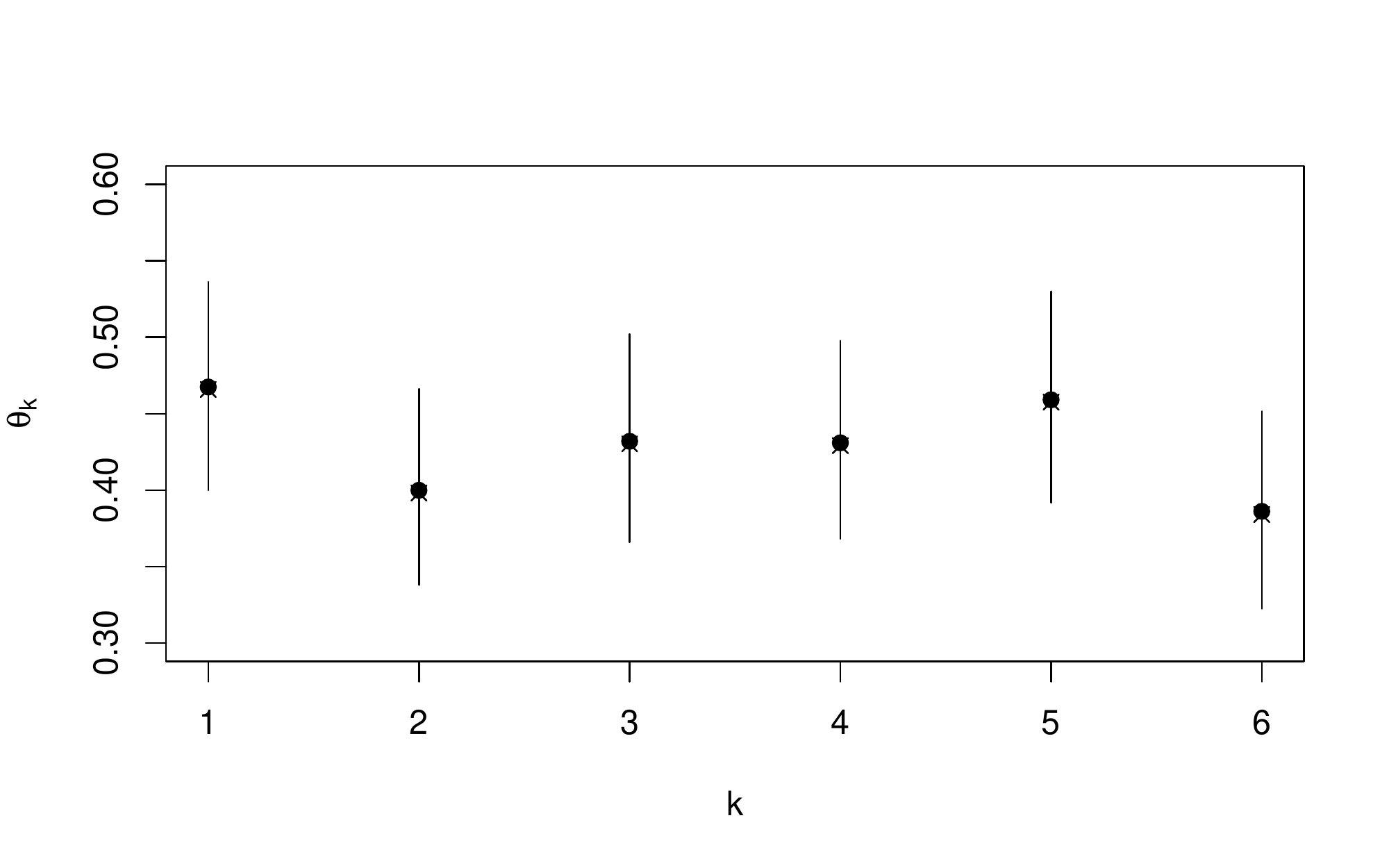}
\caption{Marginal posterior means (dots), MAP estimators/MLEs
  (crosses) and 95\% equal-tailed credible intervals (lines) for the
  parameters of the GPL model (left) and the reverse GPL model (right)
  applied to the puddings data.}
\label{fig:pud-postMAP}
\end{figure}
Posterior means and MAP estimates are also reported in Table
\ref{tab:pud-postpred}. The MAP estimates, which correspond to the MLEs
with the current choice of prior, were obtained via the EM algorithm of
Section~\ref{sec:pairs-em}; for the results reported here the
algorithm took 47 iterations to satisfy the stopping criterion. The
MAP estimates and the posterior means are very similar in this example.
There is also clearly considerable overlap in the credible intervals.
\begin{table}[h]
\centering
\begin{tabular}{lll}
\hline
GPL model & Estimator & Estimate \\ 
\hline
Standard  & Posterior mean & $\bar{\vec{\theta}}=(0.396, 0.418, 0.424, 0.432, 0.442, 0.469)'$ \\
 & MAP & $\hat{\vec{\theta}}=(0.393, 0.416, 0.422, 0.429, 0.440,
0.467)'$ \\
 & Total order & $\hat{\vec{y}}=(6, 5, 4, 3, 2, 1)'$\\
 & LOOIC & 1632.2 (7.9) \\
\hline
Reverse  & Posterior mean & $\bar{\vec{\theta}}=(0.467, 0.400, 0.432, 0.431, 0.459, 0.386)'$ \\
 & MAP & $\hat{\vec{\theta}}=(0.466, 0.398, 0.430, 0.429, 0.458, 0.384)'$ \\
 & Total order & $\hat{\vec{y}}=(6, 2, 4, 3, 5, 1)'$\\
 & LOOIC & 1629.9 (8.4) \\
\end{tabular}
\caption{Posterior and posterior predictive summaries for GPL models
  applied to the puddings data. The value in parentheses after the
  LOOIC estimate is its standard error.}
\label{tab:pud-postpred}
\end{table}

Table~\ref{tab:pud-postpred} also contains an estimate of the optimal total order
$\hat{\vec{y}}$ under each model, that is the aggregate rank ordering
without ties. These estimates were obtained by ordering the posterior
means of the parameters from largest to smallest for the GPL model,
and from smallest to largest for the reverse GPL model; see
Section~\ref{sec:total-order}. In fact, in each case these
\textit{are} the optimal total orders in terms of the posterior
predictive distribution, as they maximise the posterior predictive
probability under the respective model over all $6!$ possible total
orders. Interestingly, the total orders are different under the two
models, with entities 5 and 2 swapping in 2nd and 5th place.

There is, however, agreement under the two models as to the ``best'' brand of chocolate
pudding: brand 6.  Under the GPL model, this is the
entity with the largest posterior mean and therefore the entity which
maximises the posterior predictive probability of being ranked first
on its own in a complete rank ordering; see
Section~\ref{sec:bestentity}. Under the reverse GPL model, the
candidate for the optimal entity is that which has the smallest
posterior mean, in this case brand 6.

Table~\ref{tab:pud-postpred} also contains estimates of the
leave-one-out information criterion (LOOIC) which is advocated as a
trade-off between predictive fit and complexity of a model in \cite{vehtari2017practical} and is useful for model
comparison and selection.  Smaller values of LOOIC are better.
The values were computed using the \texttt{loo} package~\citep{loo} in \textsf{R}.  Here the reverse
GPL model has the marginally smaller LOOIC value although the two values are
statistically indistinguishable given the sampling variability (as
measured through the standard errors). 

Interestingly, a Bayesian analysis of the puddings data using the
Davidson model provides similar inferences to those under the reverse
GPL model; see Appendix~\ref{sec:puddings-davidson}.

\subsection{Complete rank orderings: NASA}
\label{sec:mariner}

\subsubsection{Background and data}
\label{sec:mariner-background}

\cite{DyerM76} describe a dataset of preferences of $n=10$ teams of
scientists tasked by NASA  to rank $K=32$ possible pairs of 
trajectories for the Mariner Jupiter/Saturn 1977 Project, which later
become the Voyager mission. The idea was to aggregate the preferences
from the science teams and select a single pair of trajectories for
the spacecraft.  

The NASA dataset is available from
\url{www.preflib.org/dataset/00003}. These are complete
rank orderings, so that $m_i=n_i=K$ for $i=1,\ldots,n$. Of the 10
teams, only two (teams 8 and 10) do not record any ties. Ties are
prevalent, and several ties of high order (the number of entities in
the tie) are observed, with an 8-way, a 10-way and even a 24-way tie.

This is a dataset that cannot currently be fitted using the
Davidson-Luce model.  When $K=32$, $|\Omega|=2^{32}-1=4294967295$, so
there are over 4 billion terms in the summation to get the normalising
constant. Even truncating the summation at ties of the maximum
observed order, as is done in the \texttt{PlackettLuce} \textsf{R}
package~\citep{turner2020modelling}, is not
enough since there is a 24-way tie. Indeed, \cite{turner2020modelling} suggests that the \texttt{PlackettLuce} package can only fit models with
ties up to order 4 and half of the teams in the NASA dataset report
ties with more than 4 entities. Fortunately, this dataset can be
handled easily and efficiently by the GPL model. 

\subsubsection{Bayesian analysis via the GPL and reverse GPL models}
\label{sec:mariner-bayes}

Table~\ref{tab:summary-real} shows, amongst other things, that 10000
iterations of the Gibbs sampler of Section~\ref{sec:gibbs} took about
13 seconds and the minimum ESS was 6196, indicating better mixing than
in the analysis of the puddings data.  The mixing of the chain is
illustrated in Figure~\ref{fig:nasa-mix} which shows a trace plot
and an autocorrelation function plot for the parameter corresponding
to this minimum ESS and indicates that the chain mixes well.
\begin{figure}[h]
\centering
\includegraphics[width=0.95\linewidth]{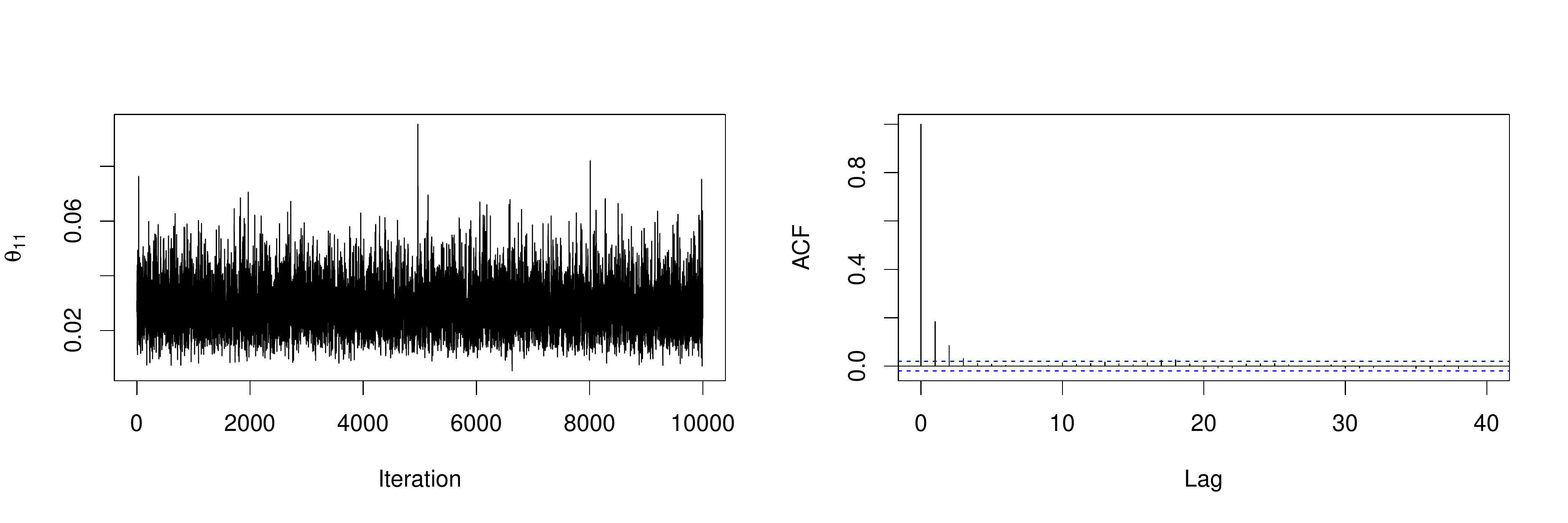}
\caption{NASA data: Trace plot (left) and autocorrelation function plot (right)
  for 10000 posterior samples of $\theta_{11}$ obtained via the Gibbs sampler of Section~\ref{sec:gibbs}. The
  samples have an ESS of 6196, the lowest of the $K=32$ parameters.}
\label{fig:nasa-mix}
\end{figure}

Figure~\ref{fig:nasa-postMAP} displays summaries of the marginal
posterior distributions of the parameters of the GPL model (left) and
the reverse GPL model (right), in
particular posterior means, modes and credible intervals. 
\begin{figure}[h]
\centering
\includegraphics[width=0.48\linewidth]{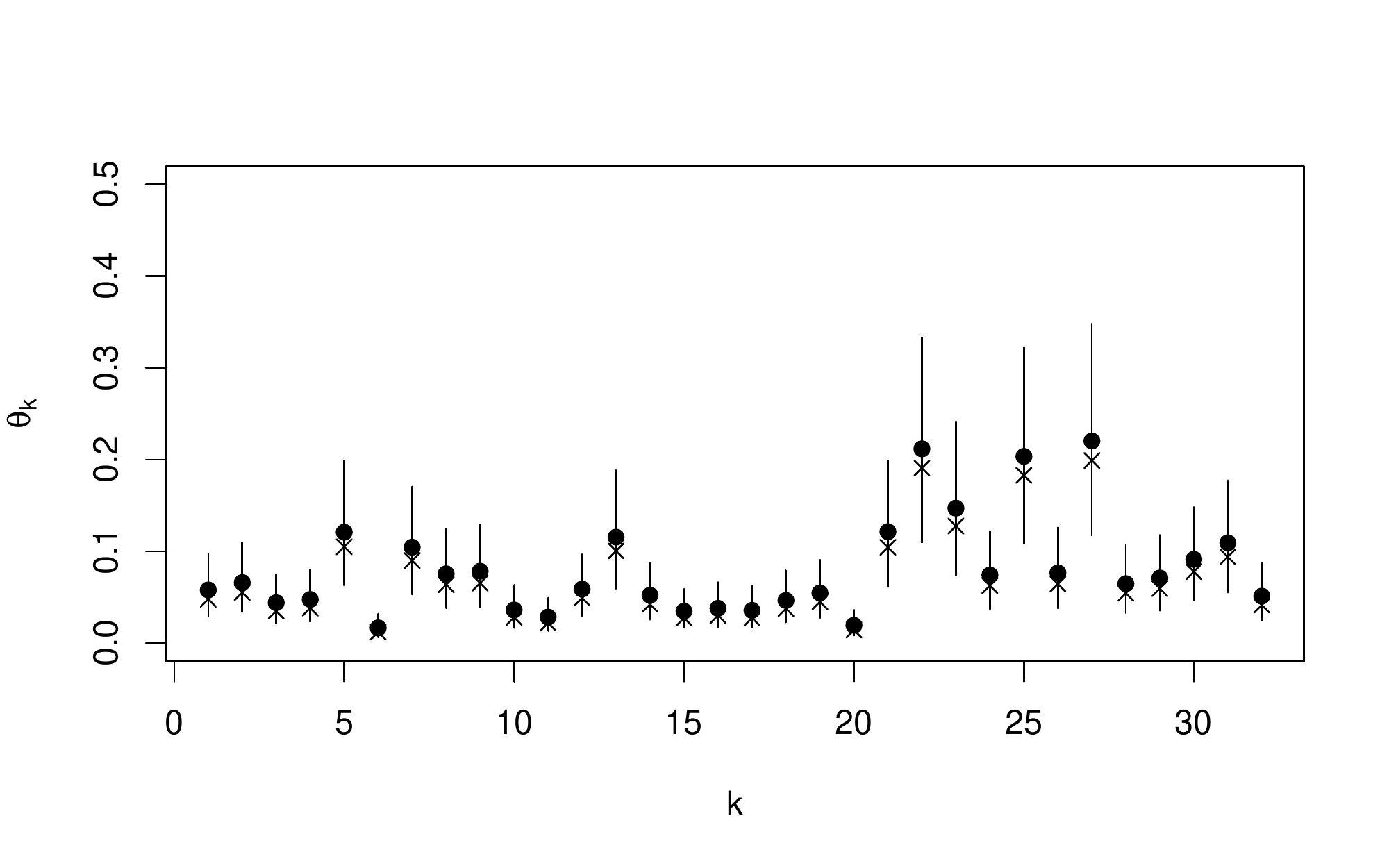}
\includegraphics[width=0.48\linewidth]{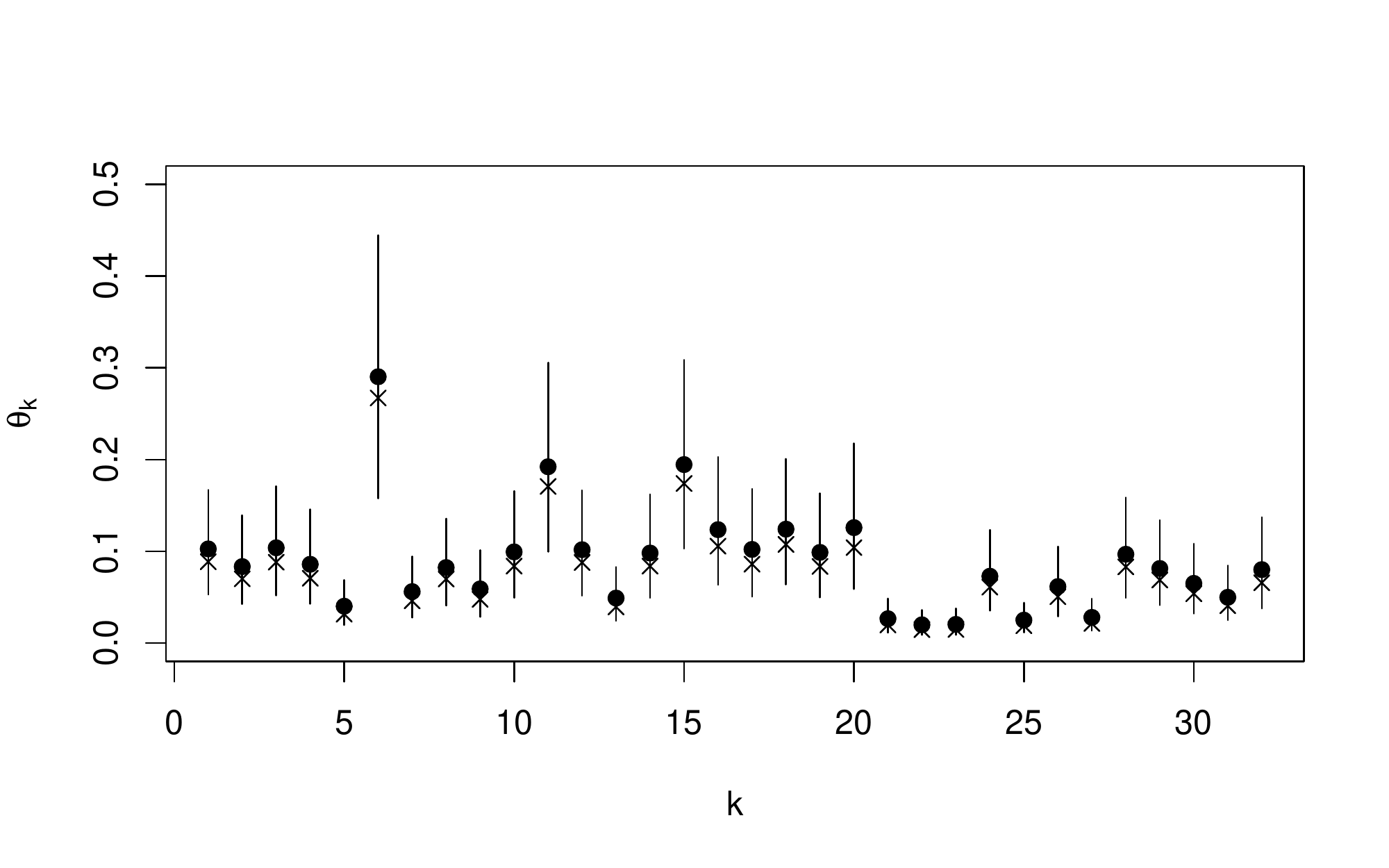}
\caption{Marginal posterior means (dots), MAP estimators/MLEs
  (crosses) and 95\% equal-tailed credible intervals (lines) for the
  parameters of the GPL model (left) and the reverse GPL model (right)
  applied to the NASA data.}
\label{fig:nasa-postMAP}
\end{figure}
The MAP estimates, which correspond to the MLEs
with the current choice of prior, were obtained via the EM algorithm of
Section~\ref{sec:em}; for the results reported here the
algorithm took 30 and 31 iterations, respectively for the GPL and
reverse GPL models,  to satisfy the stopping criterion. In contrast to
the puddings example, the credible intervals are not all overlapping;
the differences between the MAP and posterior mean estimates are also
more noticeable for the NASA data. The estimated parameter values
under the GPL model are generally
lower than those from the puddings example, perhaps reflecting the
prevalence of ties. 

Table~\ref{tab:nasa-postpred} contains an estimate of the optimal
total order $\hat{\vec{y}}$ under each model, that is, the aggregate
rank ordering without ties. These estimates were obtained by ordering
the posterior means of the parameters, as described in
Section~\ref{sec:total-order}. The total orders are quite different
under the two models, with, for example, entity 27 being first under
the GPL model but only 5th under the reverse GPL model.
\begin{table}[h]
\centering
\begin{tabular}{lll}
\hline
GPL model & Estimator & Estimate \\ 
\hline
Standard  & Total order & $\hat{\vec{y}}=(27, 22, 25, 23, 21,  5, 13, 31,  7,
30,  9, 26,  8, 24, 29,  2,$\\
          &             &  \qquad\quad $28, 12,  1, 19, 14, 32,  4, 18,  3, 16, 10, 17, 15, 11, 20, 6)'$\\
 & LOOIC & 1901.5 (29.6) \\
\hline
Reverse  & Total order & $\hat{\vec{y}}=(22, 23, 25, 21, 27,  5, 13, 31,  7,  9,
26, 30, 24, 32, 29,  8,$\\
  & &   \qquad\quad $2,  4, 28, 14, 19, 10, 12, 17,  1, 3, 16, 18, 20, 11, 15,  6)'$\\
 & LOOIC & 1828.8 (29.7) \\
\end{tabular}
\caption{Posterior predictive summaries for GPL models
  applied to the NASA data. The value in parentheses after the
  LOOIC estimate is its standard error.}
\label{tab:nasa-postpred}
\end{table}

This is a scenario where an aggregate ranking is not of primary
interest. The main interest lies in the choice of the optimal, most
preferred, trajectory pair. This is trajectory pair 27 under the GPL
model and trajectory pair 22 under the reverse GPL model. Note also
that trajectory pair 22 was placed second under the GPL model. The
LOOIC values in Table~\ref{tab:nasa-postpred} suggest that the reverse
GPL model is the better of the two models in terms of fit, even
accounting for the relatively large standard errors. If one of the two
models was to be selected for modelling these data then it would be
the reverse GPL model. It is reassuring to note that the optimal
trajectory pair under the reverse GPL model, trajectory pair 22, was the
trajectory pair selected by NASA (based on the process described in
\cite{DyerM76}) for the
Mariner Jupiter/Saturn 1977 mission.

\subsection{Top-$m$ on subsets of entities: Golf}
\label{sec:golf}

\subsubsection{Background and data}
\label{sec:golf-data}

The final dataset contains the results of the 47 golf tournaments from
the PGA Tour in 2021, including the four major tournaments: the Masters,
U.S.\ PGA, U.S.\ Open and The Open. The results were obtained from The
Official World Golf Rankings webpages (\url{www.owgr.com}) in
February~2022 but only the final finishing positions were recorded, rather than
the scores.  Golfers who were disqualified or withdrew from a
tournament were not included in the final rank ordering for that
tournament. Further details of the data and relevant information about
the game of golf is provided in Appendix~\ref{sec:golf-info}.

In what follows, the first $n=46$ tournaments are used to fit the GPL
model. Then predictions are made for the results of the final
tournament of the 2021 season, the Hero World Challenge. The dataset
includes $K=631$ golfers.  The data are in the form of top-$m$ rank
orderings as players who ``missed the cut'' in a particular tournament
are listed but not ranked. 

The maximum number of golfers in any one tournament is 156.
Figure~\ref{fig:uspga21-hist-tournaments} displays a histogram of the
number of tournaments contested by each of the 631 golfers. There is a
clear bimodal distribution with a group of around 200 golfers
appearing in  15 or more tournaments and a large number (248) of the
631 golfers appearing in only one tournament; the median number of
tournaments contested was 4.
\begin{figure}[h]
\centering
\includegraphics[width=0.6\linewidth]{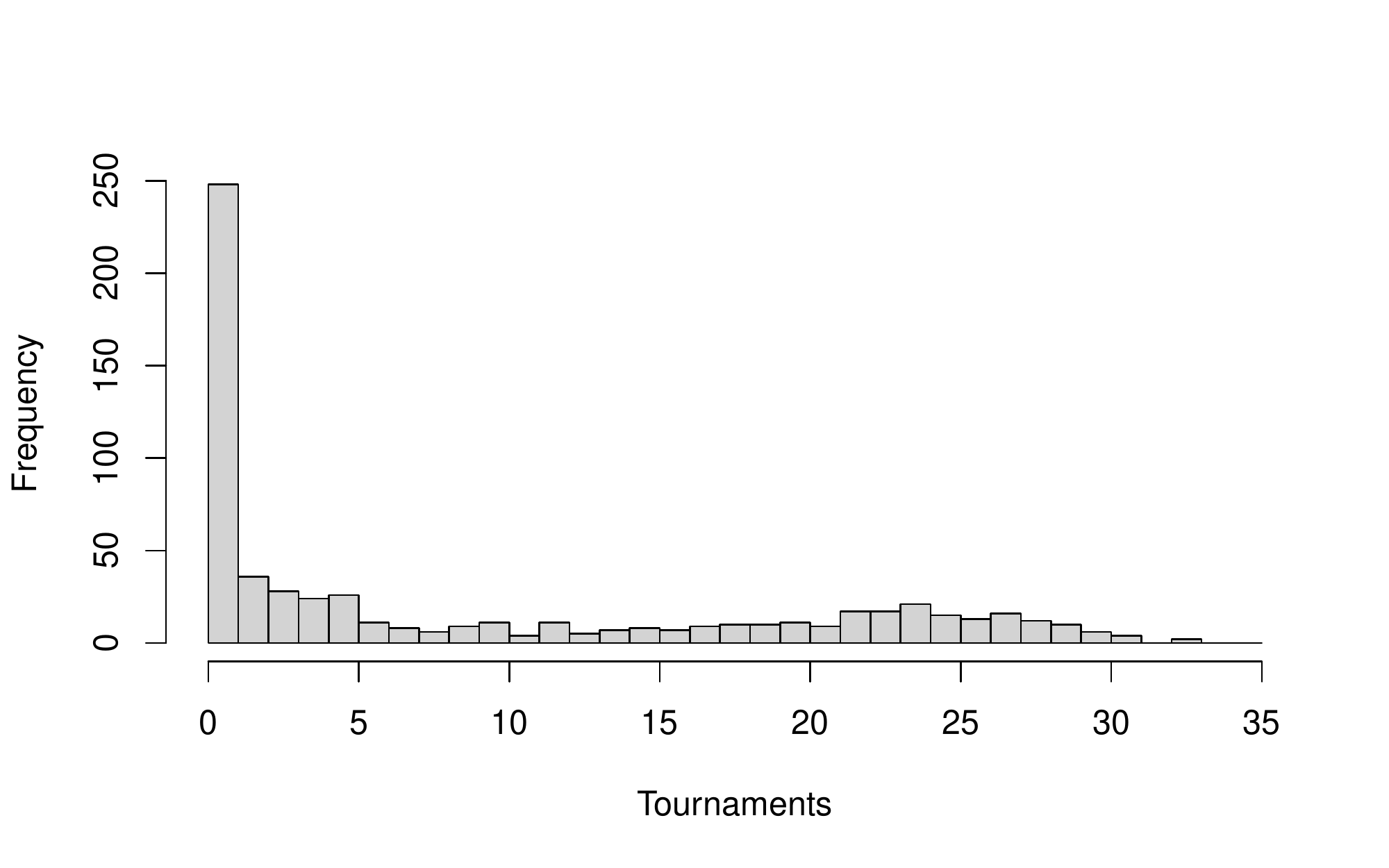}
\caption{Golf data: histogram of number of tournaments contested by
  $K=631$ golfers in 46 tournaments in 2021.}
\label{fig:uspga21-hist-tournaments}
\end{figure}

This is another example which is currently beyond the capabilities of
the Davidson-Luce model as ties of more than 4 golfers are relatively
common.  Fortunately, the GPL model is able to deal naturally with
both the top-$m$ rankings and the large number of ties. The generative
nature of the GPL model also proves to be very useful for predictive
inference. 

\subsubsection{Bayesian analysis via the GPL model}
\label{sec:golf-bayes-gpl}

The Gibbs sampler of Section~\ref{sec:gibbs} took about 20 minutes to
perform 10000 iterations and the minimum ESS was 7203, indicating
slightly better mixing than in the analysis of the NASA data; see
Table~\ref{tab:summary-real}.  As in the analyses of the other
datasets in this paper, posterior correlations between parameters are
typically low. The
good mixing of the chain is illustrated in Figure~\ref{fig:uspga21-mix} which
shows a trace plot and an autocorrelation function plot for the
parameter corresponding to the minimum ESS.
\begin{figure}[h]
\centering
\includegraphics[width=0.95\linewidth]{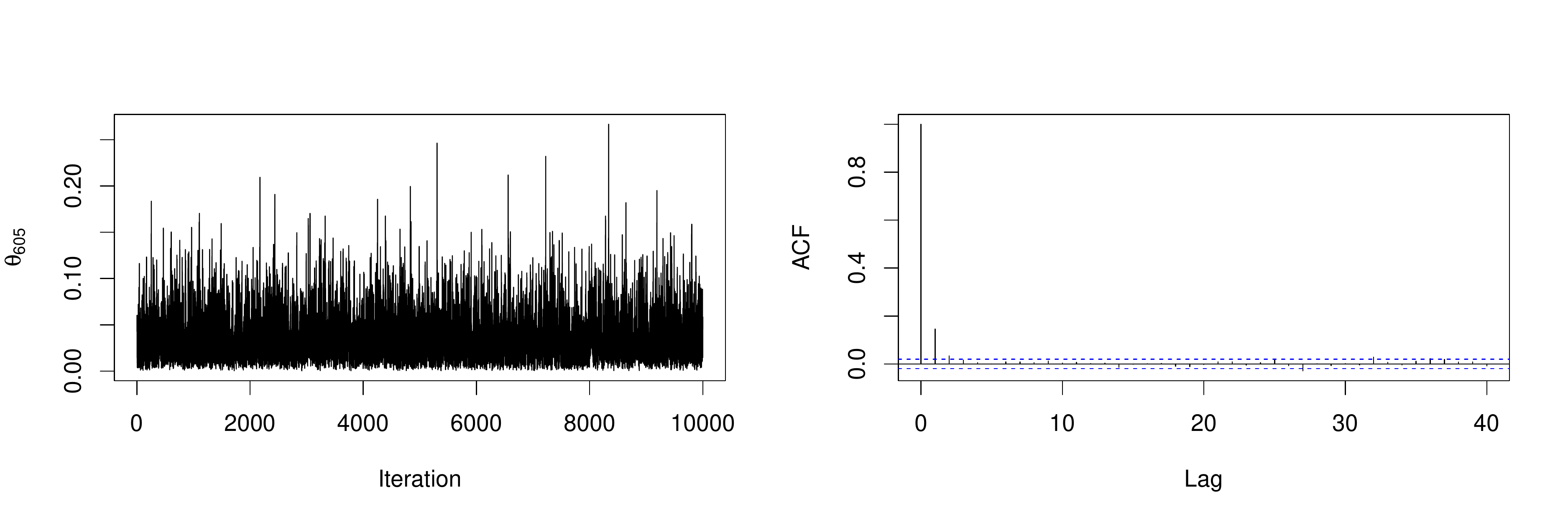}
\caption{Golf data: Trace plot (left) and autocorrelation function plot (right)
  for 10000 posterior samples of $\theta_{605}$ (corresponding to
  Ryuji Imada)  obtained via the Gibbs sampler of Section~\ref{sec:gibbs}. The
  samples have an ESS of 7456, the lowest of the $K=631$ parameters.}
\label{fig:uspga21-mix}
\end{figure}
This level of mixing is encouraging, given the large
number of parameters and latent variables, and provides evidence of
the statistical efficiency and scalability of the Gibbs sampler. 

The inferences presented below relate to a subset of 197 players who
played 15 or more tournaments, denoted $\mathcal{K}_{15+}$.
Table~\ref{tab:uspga21-top10} lists the top 10 players  ranked in
terms of the estimated posterior means $\bar{\theta}_k$ for
$k\in\mathcal{K}_{15+}$. 
\begin{table}[tbh]
\centering
\begin{tabular}{llrc}
\hline
Rank & Player & $\bar{\theta}_k$ & 95\% CI  \\
\hline
1 & Jordan Spieth      &  0.120 & (0.075, 0.175)\\
2 & Louis Oosthuizen   &  0.110 & (0.067, 0.163)\\
3 & Jon Rahm           &  0.109 & (0.064, 0.164)\\ 
4 & Collin Morikawa    &  0.106 & (0.065, 0.156)\\ 
5 & Daniel Berger      &  0.104 & (0.063, 0.152)\\ 
6 & Justin Thomas      &  0.102 & (0.062, 0.149)\\
7 & Bryson DeChambeau  &  0.100 & (0.061, 0.147)\\
8 & Xander Schauffele  &  0.098 & (0.060, 0.144)\\
9 & Viktor Hovland     &  0.097 & (0.061, 0.142)\\
10& Paul Casey         &  0.096 & (0.056, 0.144)\\ 
\end{tabular}
\caption{Top 10 golfers out of those who played 15 or more tournaments
   ranked in terms of estimated  posterior means
  $\bar{\theta}_k$. Equal-tailed 95\% credible intervals are also given.}
\label{tab:uspga21-top10}
\end{table}
Jordan Spieth is the top ranked golfer based on the GPL model but
there is considerable overlap in the credible intervals. A full total order
of all 197 golfers in $\mathcal{K}_{15+}$ is straightforward to obtain
but is not the main focus of this analysis. 

\subsubsection{Predictive inference for 2021 Hero World Challenge}
\label{sec:golf-bayes-pred}

The GPL model lends itself naturally to prediction of future outcomes
and this is illustrated here by using the data on the first $n=46$
tournaments to predict the outcome of the final tournament in 2021,
the Hero World Challenge which was held between the 2nd and 5th
December 2021 in Providence, Bahamas. This tournament, hosted by Tiger
Woods, involved 20 of the top players. Figure~\ref{fig:uspga21-hero-post}
displays summaries of the marginal posterior distributions for the 20
players taking part in the 2021 Hero World Challenge, ordered in terms
of their posterior means. 
\begin{figure}[tbh]
\centering
\includegraphics[width=0.8\linewidth]{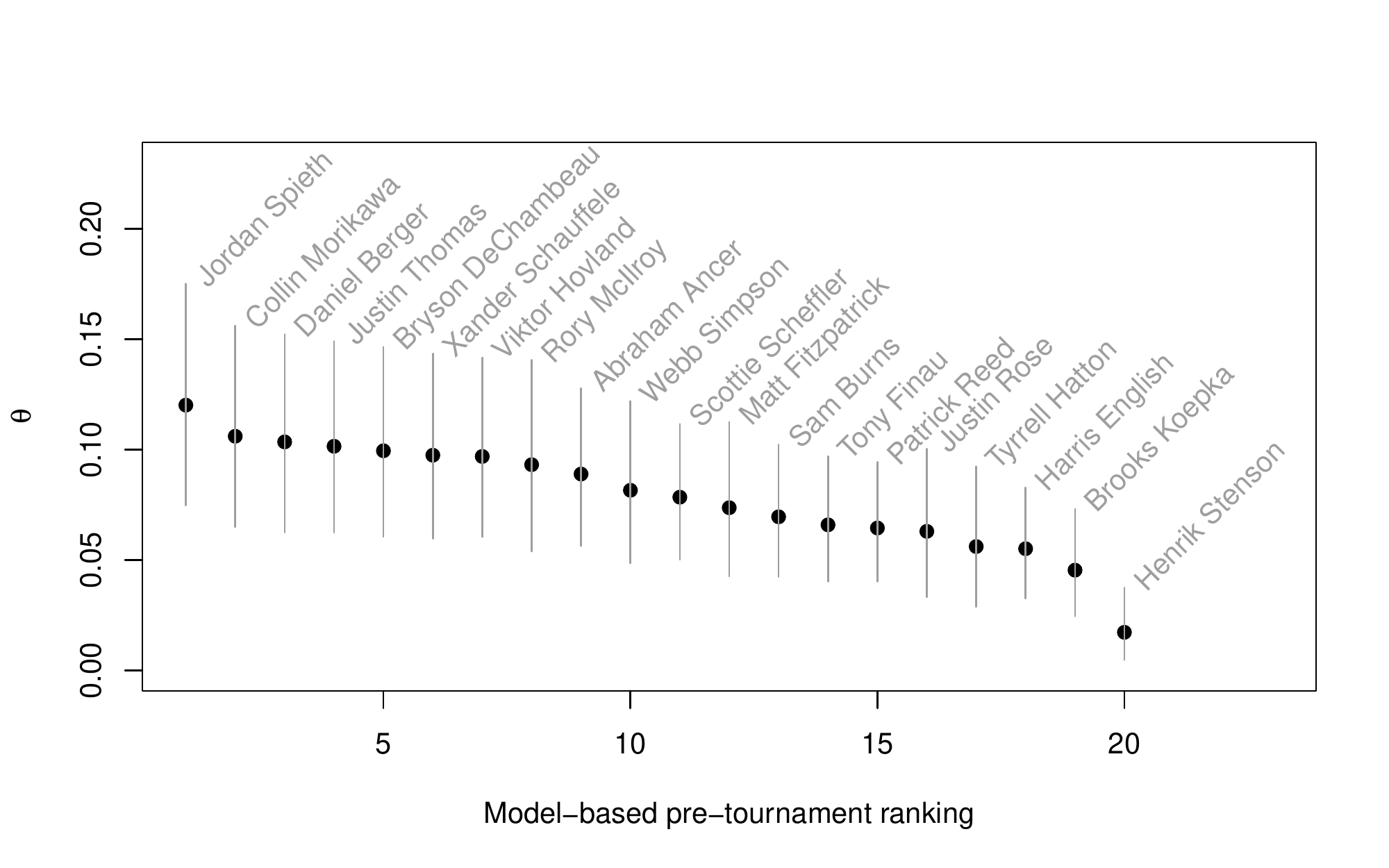}
\caption{Marginal posterior means (dots) and 95\% equal-tailed credible intervals (lines) for the
  parameters of the GPL model for the 20 players taking part in the
  2021 Hero World Challenge.}
\label{fig:uspga21-hero-post}
\end{figure}

It is of interest to use the GPL model to determine (prior) predictive
probabilities for each player winning the tournament. Such
probabilities can be compared directly to bookmakers odds to see if
the model is giving sensible predictions.   

Recall from Section~\ref{sec:int-prop} that $p_i=\frac{\theta_i
  \prod_{j\neq i}(1-\theta_j)}{1-\prod_{j}(1-\theta_j)}$ for
$i,j\in\mathcal{K}_p$, where $\mathcal{K}_p$ denotes the players in
this tournament, gives the probability under the GPL model of player
$i$ being outright winner (no ties). Conditioning on there being an
outright winner suggests $p_i/\sum_{j}p_j$ as a possible estimate of
the probability that player $i$ wins the tournament.  However, these
quantities are approximations due to the fact that in the actual
tournament ties for first place are broken by a play-off (extra holes
are played until an outright winner is found). The generative nature
of the GPL model allows a simulation-based estimate of the probability
of a win to be used which directly mimics the process used in the
actual tournament. Specifically, the outcome of the tournament is
simulated $N=10000$ times with values of the parameters sampled from
the posterior distribution; if there is a tie for first in the $j$th
simulated tournament then it is broken by simulating extra rank 
orderings for only the tied players, until there are no ties remaining.

Table~\ref{tab:uspga21-hero-pred} shows the final leaderboard from the
2021 Hero World Challenge together with simulation-based prior
predictive win probabilities calculated as described above. The GPL
model predicted that the tournament winner Viktor Hovland had a 7\%
chance of winning. This compares to the 9\% chance given to the
model-based pre-tournament favourite, Jordan Spieth. 
\begin{table}[tbh]
\centering
\begin{tabular}{rlrr}
\hline
Pos & Player & \multicolumn{2}{c}{Probability to win} \\
    &        &  GPL    & Market \\
\hline
1 &	Viktor Hovland	& 0.0676 & 0.0702\\
2 &	Scottie Scheffler & 0.0475 & 0.0496\\
T3 &	Sam Burns	& 0.0398 & 0.0496\\
T3 &	Patrick Reed	& 0.0350  & 0.0324\\
T5 &     Justin Thomas	& 0.0682 & 0.0766\\
T5 &	Collin Morikawa	& 0.0732 & 0.0842\\
T7 &	Tony Finau	& 0.0391 & 0.0496\\
T7 &	Daniel Berger	& 0.0691 & 0.0443\\
T9 &	Brooks Koepka	& 0.0224 & 0.0401\\
T9 &	Tyrrell Hatton	& 0.0314 & 0.0290\\
T9 &	Justin Rose	& 0.0312 & 0.0366\\
T12 &   Matt Fitzpatrick	& 0.0455 & 0.0366\\
T12 &	Xander Schauffele	& 0.0666 & 0.0648\\
T14 &	Bryson DeChambeau	& 0.0695 & 0.0648\\
T14 &	Abraham Ancer	&0.0553 & 0.0401\\
T14 &	Harris English	& 0.0313 & 0.0248\\
17  &	Webb Simpson	& 0.0510  & 0.0443\\
18  &	Rory McIlroy	& 0.0597 & 0.0936\\
19  &	Henrik Stenson	& 0.0077 & 0.0126\\
20  &	Jordan Spieth	& 0.0889 & 0.0562
\end{tabular}
\caption{Final leaderboard for the 2021 Hero World Challenge together
  with probabilities of a win based on the GPL model and pre-tournament
  market odds}
\label{tab:uspga21-hero-pred}
\end{table}
Bookmaker/market-based pre-tournament odds were also provided by \cite{cbs}; these have been
converted to probabilities and then normalised to sum to one for
direct comparison with the GPL model-based win probabilities as a means of external validation for the
model. The respective probabilities are listed in
Table~\ref{tab:uspga21-hero-pred} and show that the model-based
predictions are roughly in line with those from the bookmaker/market,
suggesting that the GPL model may form the basis of a model that
is suitable for prediction in this context. 

Simulations from the posterior predictive
distribution are also ideal for assessing the goodness of fit of the model \citep{BDA3}. Many possible features of the data could be
scrutinised and the choices made should be problem
dependent. For example, here it might be pertinent to assess whether
the patterns of ties in the simulated data are consistent with those
in the observed data. Figure~\ref{fig:uspga21-hero-sets-ties} displays
the predictive distribution of the number of ``buckets'' (or ordered
sets of ties)  in the $N=10000$ simulated realisations of the 2021 Hero World
Challenge. 
\begin{figure}[tbh]
\centering
\includegraphics[width=0.6\linewidth]{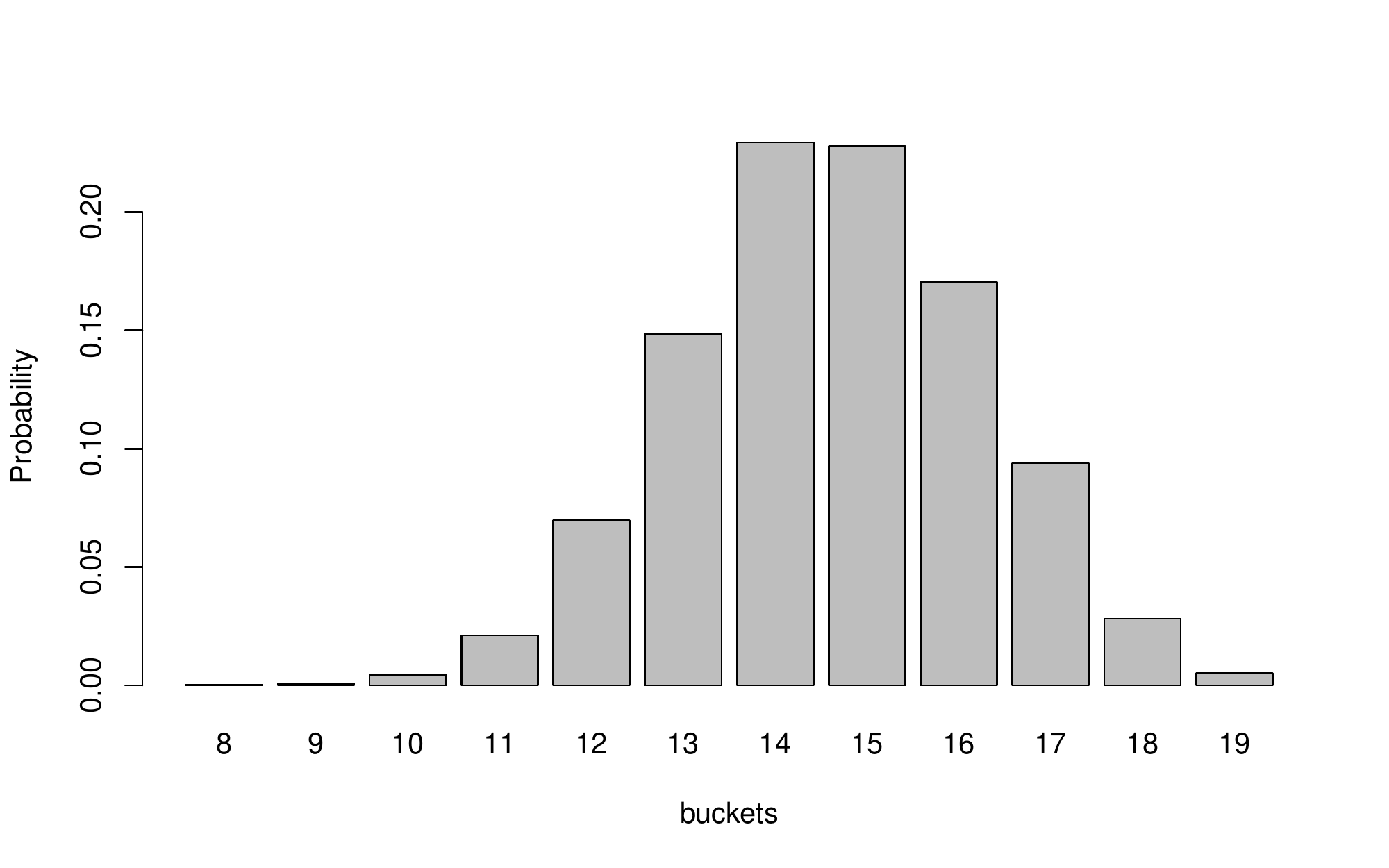}
\caption{Predictive distribution of the number of buckets of ties
  based on the $N=10000$ simulated realisations of the 2021  Hero World
Challenge.}
\label{fig:uspga21-hero-sets-ties}
\end{figure}
The actual observed value of 12 buckets has a predictive
probability of 0.067 and  lies
towards the lower tail of the predictive distribution, but is
nevertheless well supported and not unusual suggesting that for this
facet of the data at least, the GPL model is acceptable. Clearly there
are many other aspects of the data that could and should be subjected
to validation but the example presented here illustrates the general
point. 

\section{Discussion}
\label{sec:discuss}

The GPL model, as described in this paper, has been shown to be a
simple, intuitively appealing and natural model for rank ordered data
with ties, inheriting several desirable features from its continuous
counterpart the Plackett-Luce model. The GPL model was motivated as a
discrete analogue of the Plackett-Luce model and the Plackett-Luce
model was shown to be a limiting special case. The tractable nature of
the geometric latent variables in the GPL model allowed a closed form
expression for the likelihood to be derived. Simple data
augmentation-based algorithms for Bayesian inference were also
derived. The ease with which these algorithms could be implemented and
their efficiency was demonstrated on several real datasets covering a
range of data types. The real data examples showcased several features
of the GPL model, including its suitability for simulation-based
predictive inference. 

The GPL model is capable of fitting datasets of a size and complexity
that are currently beyond the reach of its close competitor the
Davidson-Luce model.  For example, the GPL model easily handled full
Bayesian inference on a model with over 600 parameters in the example
presented in Section~\ref{sec:golf}. Examining how well the GPL model
will scale up to massive datasets is an interesting avenue for future
work.

One of the main attractive features of the Plackett-Luce model is its
tractability, and this has allowed for many extensions of the basic
model. The GPL model can also be extended in several ways, and can be
used as a building block in more complex models. Natural extensions
are to mixtures of GPL models for heterogeneous datasets, inclusion of
covariates, and stage-specific variations (like that proposed by
\cite{Benter94} for the Plackett-Luce model). These extensions are the
subject of ongoing work. Whilst the main methodological focus of this
paper has been on the latent variable representation of the GPL model
and its use in data augmentation algorithms, the closed form
likelihood function opens the door to several of the aforementioned
and other possible extensions. In particular, the GPL model can be
conveniently coded in probabilistic programming languages such as
\textsf{Stan} and this should help to facilitate rapid uptake,
extensions and advancements in terms of both applications and
methodology.

\bibliographystyle{agsm}
\bibliography{geom}

\newpage

\begin{center}
\textbf{\LARGE Appendices}
\end{center}

\appendix

\section{Useful properties of geometric random variables}
\label{app:geom-prop}

Suppose that $X\sim
\text{Geom}(\theta_x)$, that is, the random variable $X$ has a geometric distribution with probability mass function
\[
\Pr(X=x)=(1-\theta_x)^{x-1}\theta_x, \quad 0<\theta_x\leq 1, \quad x=1,2,\ldots.
\]
It follows that $X$ has survival function
\[
S_X(x)=(1-\theta_x)^x, \quad x=1,2,\ldots.
\]

The following elementary properties are needed for the derivations presented in
the paper and are easily verified. 

If $Y\sim \text{Geom}(\theta_y)$ is independent of $X$ then 
\begin{align*}
\Pr(X = Y) &= \frac{\theta_x \theta_y}{1-(1-\theta_x)(1-\theta_y)}\\
\Pr(X < Y) &= \frac{\theta_x(1-\theta_y)}{1-(1-\theta_x)(1-\theta_y)}.
\end{align*}

If $X_i\overset{\text{indep.}}{\sim} \text{Geom}(\theta_i)$ for $i=1,\ldots,n$,
then 
$X_0=\min \{X_1,\ldots,X_n\}\sim \text{Geom}(1-\prod_{i=1}^n(1-\theta_i))$.

\section{Bayesian analysis of the puddings data using the Davidson model}
\label{sec:puddings-davidson}

For comparison with the results of the GPL models in Section~\ref{sec:puddings-GPL} and to put the
results in context, a Bayesian analysis of the puddings data using the
Davidson model (Section~\ref{sec:Davidson}) was also performed. Independent gamma prior
distributions were adopted for the worth parameters, that is,
$\lambda_k\overset{\text{indep.}}{\sim}\text{Gamma}(a_\lambda,b_\lambda)$
for $k=1,\ldots,K$, with $a_\lambda=b_\lambda=1$ for the results presented herein. The prior distribution avoids the necessity to
implement a constraint on the $\lambda_k$ to ensure identifiability.
The prior distribution on the ties parameter $\delta$ is also gamma,
that is $\delta\sim\text{Gamma}(a_\delta,b_\delta)$, with in this
example, $a_\delta=b_\delta=1$. The Davidson model (see Section~\ref{sec:Davidson}) was
implemented in \texttt{rtan}~\citep{rstan} an \textsf{R} interface to \textsf{Stan}~\citep{stan22v2.31} and Figure~\ref{fig:pud-dav-post}
summarises the marginal posterior distributions of the parameters
based on 10000 values sampled from the posterior distribution.
\begin{figure}[h]
\centering
\includegraphics[width=0.6\linewidth]{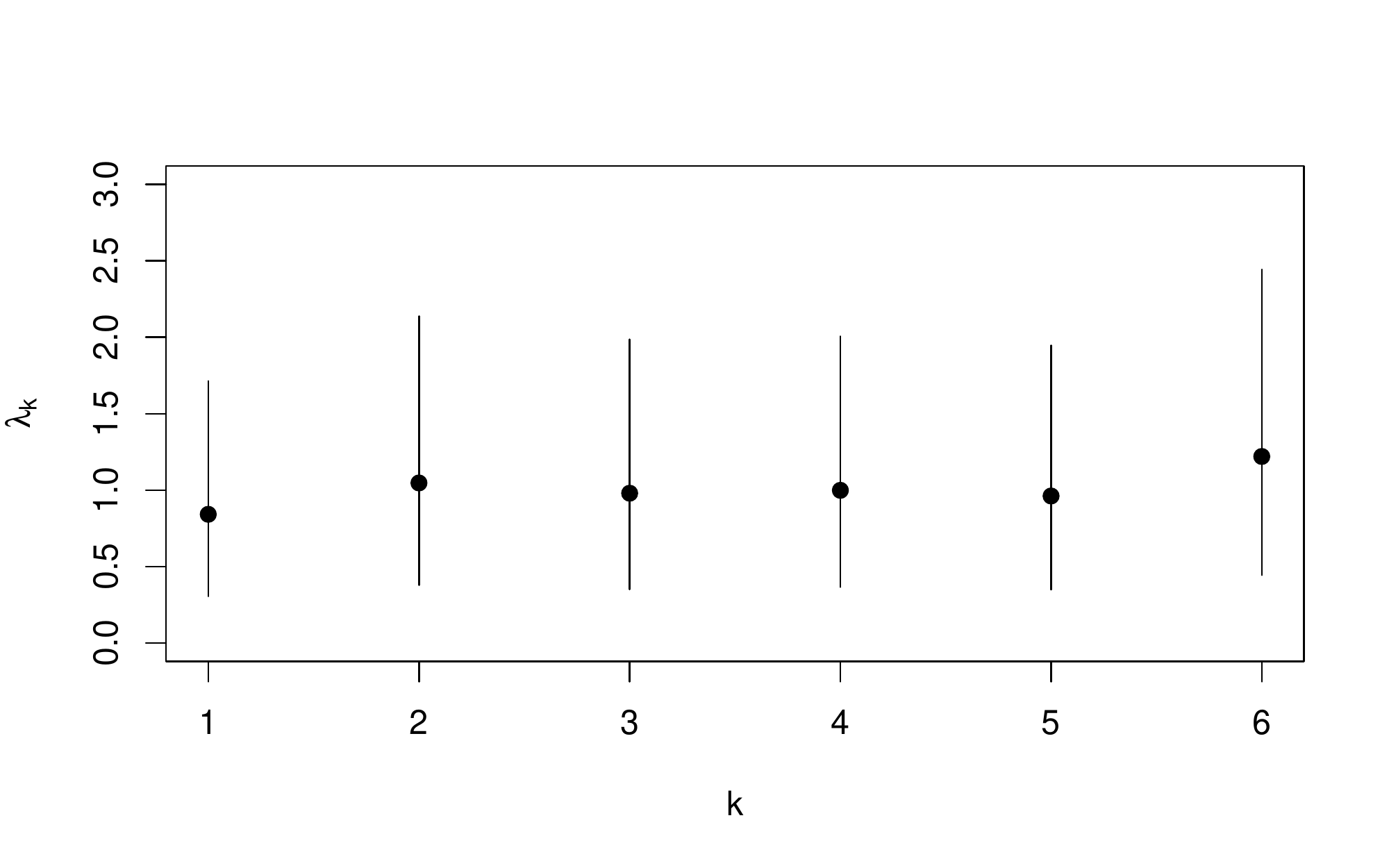}
\caption{Marginal posterior means (dots) and 95\% equal-tailed
  credible intervals (lines) for the worth 
  parameters of the Davidson model applied to the puddings data.}
\label{fig:pud-dav-post}
\end{figure}
As in the analyses under the GPL model, there is considerable overlap
in the posterior credible intervals of the $\lambda$ parameters.
Ordering the posterior means gives an overall rank ordering of the chocolate
puddings of
$\hat{\vec{y}}=(6,2,4,3,5,1)'$, the same as that under the reverse GPL
model. Perhaps this is not surprising given the similarity between the
Davidson model and the reverse GPL model; see Section~\ref{sec:Davidson}.  The posterior
mean and equal-tailed 95\% credible interval (in parentheses) for the
ties parameter $\delta$ is $0.753~(0.639, 0.879)$. These results are
comparable (up to scaling) with the approximate MLEs reported in
\cite{Davidson70} and \cite{turner2020modelling}; see Figure~1 in
\cite{turner2020modelling}. 

Finally, the LOOIC is 1631.4 (8.1) for the
Bayesian Davidson model demonstrating that the fit of the model is
comparable to that of the two GPL models. Further chi-squared goodness
of fit measures (not included) also show that the fit of the models
are very similar. For comparison, the AIC value reported in
\cite{turner2020modelling} for their non-Bayesian analysis is 1631.4.

\section{Golf data and additional information}
\label{sec:golf-info}

Consider the data at
\url{www.owgr.com/events/149th-open-championship-8187}, relating to
the 149th Open Championship, to illustrate the format of the data
analyses in this paper. The only data recorded were the values in the
finishing position column and the golfer's name column. For example,
Will Zalatoris withdrew after the first round and so is not included
in the final rank ordering.

Tournaments in the dataset are predominantly stroke play although
match play tournaments (for example, the Dell Technologies Match Play
Championship) are included. Given that the GPL model only depends on
the finishing order and not the actual score it can easily handle both
types of results.  Stroke play tournaments are typically contested
over four rounds of 18 holes, in which the golfer tries to take as few
shots as possible to get the ball in the hole, hence the lowest score
over the 72 holes wins. A match play tournament involves players
scoring points per hole, one for a win and half for a tie, where,
again, the fewest shots taken to get the ball in the hole wins. Scores
are discrete and so ties are common. Ties in first place are also
possible; for example, the Tour Championship on 5th September 2021
ended in a tie for first between Kevin Na and Jon Rahm. However, ties
for first place are typically broken by playing extra holes until an
outright winner emerges.

A common feature of golf tournaments is the ``cut'' in which the field
is reduced to the top players after 36 of the 72 holes have been
completed. For example, on the PGA Tour, it is typical for the field
to be cut to the top 65 players including ties after 36 holes. The
players who ``make the cut'' continue to play the remaining 36 holes
whereas those that ``miss the cut'' take no further part.  The results
are therefore in the form of top-$m_i$ rank orderings, where $m_i$
depends on the tournament $i$. Rather than ignore the information from
the players that missed the cut it is important to incorporate this
information correctly; they are not tied in (for example) 66th, but
they are ranked lower than 65th. 

\end{document}